\definecolor{codegreen}{rgb}{0,0.4,0}
\newlength{\mysep}
\setlist[enumerate]{noitemsep,topsep=0pt,partopsep=3pt,leftmargin=*}
\setlist[itemize]{noitemsep,topsep=0pt,partopsep=3pt,leftmargin=*}
\newcommand{\gpm}{{graph pattern mining}\xspace}
\newcommand{\GPM}{{Graph Pattern Mining}\xspace}
\newcommand{\hrate}{{hit rate}\xspace}
\newcommand{\dg}{{$\mathcal{G}$}\xspace}
\newcommand{\pg}{{$\mathcal{P}$}\xspace}
\newcommand{\sys}{{\textproc{ScaleGPM}}\xspace}
\newcommand{\NS}{{\texttt{NS}}\xspace}
\newcommand{\GS}{{\texttt{GS}}\xspace}
\newcommand{\lazy}{{{lazy-verify}}\xspace}
\newcommand{\eager}{{{eager-verify}}\xspace}
\newcommand{\hybrid}{{hybrid sampling}\xspace}
\newcommand{\Eager}{{{Eager Verify}}\xspace}
\newcommand{\usoveraryaall}{{{$565\times$}}\xspace}
\newcommand{\usoveraryaallmax}{{{$610169\times $}}\xspace}
\newcommand{\usoveraryamotif}{{{$125 \times$}}\xspace}
\newcommand{\usoverexactmotif}{{{$10563 \times$}}\xspace}
\newcommand{\usoveraryalargegraph}{{{$130 \times$}}\xspace}
\newcommand{\usoverexactlargegraph}{{{$7245 \times$}}\xspace}
\newcommand{\usoveraryalargepattern}{{{$599 \times$}}\xspace}
\newcommand{\usoverexactlargepattern}{{{$27641 \times $}}\xspace}
\newcommand{\usoveraryaclique}{{{$2747 \times $}}\xspace}
\newcommand{\usoveraryacliquemax}{{{$610169 \times$}}\xspace}
\newcommand{\usoverexactclique}{{{$4045 \times$}}\xspace}
\newcommand{\usoverexactcliquemax}{{{$65525 \times $}}\xspace}
\newcommand{\gsoverns}{{{$13 \times$}}\xspace}
\newcommand{\gsovernsmax}{{{$61 \times$}}\xspace}
\newcommand{\probP}{\text{I\kern-0.15em P}}
\newcommand{\hl}[1]{\textcolor{red}{#1}}
\newcommand{\hlg}[1]{\textcolor{codegreen}{#1}}
\newtheorem{theorem}{Theorem}
\def\BibTeX{{\rm B\kern-.05em{\sc i\kern-.025em b}\kern-.08em
    T\kern-.1667em\lower.7ex\hbox{E}\kern-.125emX}}
\begin{document}

\title{Accurate and Fast Approximate \GPM at Scale}

\author{Anna Arpaci-Dusseau, Zixiang Zhou, Xuhao Chen}
\affiliation{      
    \institution{Massachusetts Institute of Technology}
    \streetaddress{32 Vassar St}
    \city{Cambridge} 
    \state{MA} 
    \postcode{02139}
    \country{USA}
}
\email{{annaad, zpeter, cxh}@mit.edu}
\date{}

\begin{abstract}
   Approximate \gpm (A-GPM) is an important data analysis tool for numerous graph-based applications.
   There exist sampling-based A-GPM systems to provide automation and generalization over a wide variety of use cases. 
   Despite improved usability, there are two major obstacles that prevent existing A-GPM systems being adopted in practice. 
   First, the termination mechanism that decides when to terminate sampling lacks theoretical backup on confidence, and performs significantly unstable and thus slow in practice.
   Second, they particularly suffer poor performance when dealing with the ``needle-in-the-hay'' cases, because a huge number of samples are required to converge, given the extremely low \hrate of their lazy-pruning strategy and fixed sampling schemes.
   
   We build \sys, an accurate and fast A-GPM system that removes the two obstacles.
   First, we propose a novel on-the-fly convergence detection mechanism to achieve stable termination and provide theoretical guarantee on the confidence, with negligible online overhead.
   Second, we propose two techniques to deal with the ``needle-in-the-hay'' problem, \textit{\eager} and \textit{\hybrid}. 
   Our \eager method drastically improves sampling \hrate by pruning unpromising candidates as early as possible. 
   Hybrid sampling further improves performance by automatically choosing the better scheme between fine-grained and coarse-grained sampling schemes. 
   Experiments show that our online convergence detection mechanism can precisely detect convergence, and results in stable and rapid termination with theoretically guaranteed confidence.
   We also show the effectiveness of \eager in improving the \hrate,
   and the scheme-selection mechanism in correctly choosing the better scheme for various cases.
   Overall, \sys achieves an \texttt{geomean} average of \usoveraryaall (up to \usoveraryaallmax) 
   speedup over the state-of-the-art A-GPM system, Arya.
   In particular, \sys handles billion-scale graphs in seconds, where existing systems either run out of memory or fail to complete in hours.
\end{abstract}

\maketitle

\section{Introduction}

\GPM (GPM)~\cite{Arabesque,RStream,Peregrine,AutoMine,Fractal,GraphPi,Sandslash,G2Miner}, which searches for instances of small patterns (e.g., triangles) in a data graph, is a key building block in graph-based data mining.
Despite its wide adoption in real-world applications, GPM is hard to scale to large problem size as it is computationally quite expensive~\cite{G2Miner}.
Fortunately, many real-world use cases do not require exact GPM solutions.
For instance, to characterize network structures in biochemistry, ecology and engineering~\cite{Motifs2}, we need only to estimate the pattern (a.k.a motif) frequency distribution, instead of exactly counting motifs.
Therefore we only need Approximate GPM (A-GPM),
which trades accuracy for speed, as long as some error bound is met in a given application context.

There have been A-GPM systems, e.g., ASAP~\cite{ASAP} and Arya~\cite{Arya}, that are built to simplify programming and improve usability.
The system responsibility is two fold. 
First, they provide \textit{generalized} sampling techniques for arbitrary patterns and programming APIs to support a wide variety of use cases.
Second, they provide automated mechanisms to determine key sampling parameters, e.g., how many samples required to draw for a given error bound.

However, existing A-GPM systems have two major limitations that prevent them from being adopted in practice. 
First, the automated termination mechanism to determine when it is confident enough to terminate sampling, does not have strong theoretical backup on confidence, and thus is significantly unstable and slow in practice (see \cref{subsect:unstable-termination} for detail).
In existing systems, sampling is terminated when the predicted number of samples $N_s$ have been drawn.
They predict $N_s$ based on an offline error-latency profiling (ELP) procedure before launching the sampling. 
However, the ELP prediction of $N_s$ is dependent on the true count, which is supposed to be estimated on the $N_s$ samples, creating a circular dependency.
Therefore, the ELP cannot theoretically guarantee confidence in bounding the error.
Also, the ELP returns \emph{unstable} predictions, i.e., huge variances or even failures in predicting $N_s$, which leads to poor performance.

The second limitation is the poor performance when dealing with the extremely sparse, a.k.a, \textit{needle-in-the-hay} cases, where only a small number of matches of the pattern exist in a big graph (see \cref{fig:needle}).
In this case, neighbor sampling (\NS), the sampling scheme used in ASAP and Arya, fails to hit a match in most of the samples, leading to a very low sampling \hrate, e.g., 0.0000017\%. 
This results in slow convergence as a huge number of samples are required to meet the error bound.
In some extreme cases, we observe that Arya can be even slower than exact GPM solutions.

To address the unstable termination problem, 
we propose a novel on-the-fly convergence detection method for \NS. 
Instead of offline predicting the termination condition before execution, our \textit{online} method dynamically collects statistics and predicts errors during the sampling execution, until convergence.
The convergence is detected when the predicted error is below the user specified error bound. 
We prove formally in \cref{thm:error} that the probability of the true error being smaller than our predicted error is $1-\delta$, for a given confidence $1-\delta$.
This provides us theoretical guarantee in confidence, which prior systems lack.
Meanwhile, our detected termination points are stable across different runs while the ELP method in prior systems is highly unstable.
Therefore, our online method can significantly accelerate the execution, because it needs much fewer samples than prior systems, while causing negligible online overhead as statistics can be trivially collected.

As for the low \hrate problem in \NS, 
we find that the root cause is the delayed check on pattern closure in prior systems, which we call \lazy.
We then introduce \textit{\eager} that applying pruning at its earliest possible point to avoid unpromising candidates and thus improve hit rates.
We prove theoretically that \eager is unbiased,
and show that it drastically improves the sampling \hrate, 
which in turn significantly accelerate convergence.
Furthermore, for extremely sparse cases where the fine-grained \NS sampling scheme can not handle, 
we propose a \textit{\hybrid} method by adaptively selecting the better sampling scheme, between \NS and a coarse-grained scheme, graph sparsification (\GS), for various graphs and patterns.
For scheme selection, we build a cost model for each sampling scheme to estimate the execution time.
With our models, \hybrid manages to always select the cheaper one from \NS and \GS for a wide variety of test cases. 

We then build \sys, an accurate and fast A-GPM system
that incorporates our proposed mechanisms: online convergence detection, \eager and \hybrid.
\sys efficiently leverages parallel hardware and provides flexible modes to meet various accuracy requirements with confidence.
Our experiments on a multicore CPU show that, 
with orders-of-magnitude reduction in required samples and improvement in hit rates,
\sys achieves an average of \usoveraryaall (up to \usoveraryaallmax) speedup over the state-of-the-art A-GPM system, Arya.
The hybrid method further improves performance by \gsovernsmax on extremely hard (i.e. sparse) cases.
Particularly, \sys handles billion-scale graphs in seconds where previous frameworks either
run out of memory or fail to complete in hours.
This paper makes the following contributions:
\begin{itemize}
\item We conduct analysis and empirical study on sampling schemes, and identify the two major limitations and their root causes in existing A-GPM systems.
\item We propose a novel on-the-fly convergence detection method for the \NS sampling scheme, which is the first to provide theoretical guarantee on confidence.
\item We introduce the \eager mechanism to improve \hrate of the \NS scheme and thus achieve faster convergence speed. 
\item We propose hybrid sampling to further improve performance, by adaptively selecting a better scheme based on cost models.
\item We build \sys that incorporates the above novel mechanisms,
and evaluation shows that it significantly outperforms the state-of-the-art system, and efficiently handles huge graphs.
\end{itemize}

\section{Background}
\vspace{-5pt}
\subsection{Approximate \GPM} \label{subsect:agpm}

\GPM (GPM) finds subgraphs that match 
given pattern(s) \pg in a given {data graph} \dg. 
There exist \textit{explicit} GPM tasks like subgraph counting (SC) and motif counting (MC), and \textit{implicit} tasks like frequent subgraph mining (FSM)~\cite{Pangolin}.
GPM has numerous applications in {AI} and {big-data}~\cite{G2Miner}, 
including bioinformatics, chemical engineering, fraud detection, social network analysis, recommender systems, etc. 

Exact GPM is solved by enumerating subgraphs in the data graph and searching for matches.
The search space can be defined as a \textit{subgraph tree}:
each node in the tree is a subgraph of the data graph \dg. 
A subgraph $H$ in level $l$ of the tree have $l$ vertices. The root (level 0) is an empty subgraph. 
A parent node $H$ at level $i$ can be \textit{extended} to a child node $H'$, 
by adding a vertex/edge in $H$'s neighborhood in \dg, i.e., $H'=H+\{v\}$, $v \in \mathcal{N}(H)$.
Each leaf of the tree is a candidate of match,
which is then compared with the pattern \pg, to test if it is a match.
Pruning schemes, e.g., {matching order}~\cite{G2Miner,Peregrine,AutoMine}, {symmetry breaking}~\cite{GraphZero,Pangolin,Sandslash} and decomposition~\cite{ESCAPE,DecoMine}, 
are applied to reduce the search space (i.e., prune the subgraph tree).
Nevertheless, this search is extremely expensive, as the computational complexity increases exponentially in the size of the pattern.

Many real-world use cases do not require exact GPM solutions.
For example, when we use motif (a.k.a graphlet) distribution as a ``signature'' (e.g., graph similarity) for social network analysis or fraud detection, 
it is quite sufficient to just provide approximate counts of the motifs.
Also in FSM the users only want to find those \textit{frequent} patterns whose occurrences are above some user specified threshold, or simply top-$K$ most frequent patterns,
where estimated counts would be sufficient to give high quality solutions.
Therefore, for all these use cases, we can perform Approximate GPM (A-GPM) instead to substantially reduce the total amount of computation.

In this paper we focus on sampling based A-GPM approaches. 
Generally, such an approach first samples a portion of the graph, searches for match in the sample,
and makes the estimation by scaling the sampled result.
This process can be repeated multiple times to improve the confidence of estimation.
Formally, given \dg and \pg, an A-GPM solver aims to use a randomized ($\epsilon$, $\delta$)-approximation scheme, 
which estimates the number of (non-induced or induced) occurrences of \pg in \dg  
within a factor of ($1 \pm \epsilon$), 
with probability at least $1 - \delta$, 
where $\epsilon$ and $\delta$ are user defined parameters.
There are a large volume of studies on A-GPM applications, such as triangle counting~\cite{Buriol,ColorfulTC,SpectralTC,DOULION,NeighborhoodSampling,RevisitTC,TETRIS,TC-FourCycle,SampleHold}, clique/cycle counting~\cite{LightningKclique,Five-Cycle,TC-FourCycle}, motif counting~\cite{BeyondTri,PathSampling,fascia,GUISE,Bressan,Motivo,Bressan1,Lifting,PSRW,ParSE,CC-biomolecular}, butterfly counting~\cite{ButterflyCounting}, frequent subgraph mining~\cite{GREW,MaNIACS,Scalemine,Ap-FSM,ASGS}. They all use sampling to reduce computation, though their sampling schemes are \textit{customized} for the specific problems.
These custom implementations do not offer system support, like automated termination, generic APIs, or choices in speed and error trade-off.

\subsection{Sampling Schemes for GPM Problems}

\begin{algorithm}[tb]
\footnotesize
\caption{Neighbor Sampling}
\label{algo:ns}
\begin{algorithmic}[1]




\For{{\bf each} sampler $i \in$ [1, $N_s$]} \label{algo:ns:line:each-sampler}

\State $C_i \gets 0, \alpha \gets m$  
\State Sample an edge $e_1$ from $\mathcal{E}$, let $E_1 \gets \{e_1\}$ \label{algo:ns:line:sample-first-edge}

\For{$j$ in [2, $k-1$]}
\State Sample a neighboring edge $e_j$ from $E_{j-1}$'s neighborhood \label{algo:ns:line:sample-neighbor} 
\State $E_j \gets E_{j-1} \bigcup \{e_j\}$,
 $\alpha \gets \alpha \times c_j$ \Comment{\hlg{$c_j$ is the neighborhood size}} \label{algo:ns:line:scaling}
\EndFor


\If{closing edges for ($E_{k-1}$, \pg) exist in \dg} 
 $C_i \gets \alpha$  \Comment{\hlg{closure check}} \label{algo:ns:line:closure}
\EndIf
\EndFor

\State $C' = \sum C_i / N_s$ is the estimated count  \label{algo:ns:line:average}

\end{algorithmic}
\end{algorithm}

\subsubsection{Neighbor Sampling (\NS)}

\cref{algo:ns} shows how neighbor sampling works.
For each sample,
it starts with sampling one edge from \dg uniformly at random (\cref{algo:ns:line:sample-first-edge}), 
and then repetitively samples one more edge from the neighborhood of the currently sampled edges (\cref{algo:ns:line:sample-neighbor}), 
until the size (number of sampled vertices) is the same as the pattern.
It then does \textit{closure check} (\cref{algo:ns:line:closure}), i.e., check in \dg the existence of the {\it closing edges}, which form a match of the pattern together with existing edges.  
We can draw multiple samples (\cref{algo:ns:line:each-sampler}) in parallel, and average them for improved accuracy (\cref{algo:ns:line:average}). 

For a given match $\hat{M}$ where $\hat{e_1}$ is the first edge, $Pr[\hat{M}]=\frac{1}{m\cdot \prod_{2}^{k-1} c_j}$, where $c_j=|\mathcal{N}(H_{j-1})|$, $H_{j-1}$ is the $E_{j-1}$ induced subgraph, and $\mathcal{N}(H)$ is the neighbor set of $H$.
So the count is scaled by $\alpha=m\cdot \prod_{2}^{k-1} c_j$ (\cref{algo:ns:line:scaling}).
Apparently $\alpha$ varies in different samples, meaning matches are not equally likely to be sampled.
Thus 
$\alpha$ is maintained for each sample and used for normalization.

\NS has been widely used in A-GPM applications~\cite{NeighborhoodSampling,Near-Optimal,Lifting,RevisitTC} with customized optimizations.
In NS, each time an arbitrary neighbor is sampled from $H_i$'s neighborhood, which may lead to a high failure rate in the closure check.
Therefore, restrictions are added to make the sampled subgraph more likely be a match. For example, if \pg is a cycle, we can sample a path~\cite{PathSampling}, i.e., restrict the next neighbor to be from the two endpoints' neighborhood.
For an arbitrary pattern \pg, we can do neighbor sampling following \pg's spanning tree~\cite{Bressan}. 
Furthermore, automorphism check can be added to avoid redundant subgraphs~\cite{SampleMine}. 
More generally, we can sample multiple neighbors each time, instead of a single neighbor~\cite{SampleMine}. 



\begin{algorithm}[t]
\footnotesize
\caption{Edge Sparsification}
\label{algo:espar}
\begin{algorithmic}[1]




\State Randomly select a subset $E'$ of $p\times m$ edges from $E$, $m=|E|$, $0 < p \leq 1$ \label{algo:espar:line:sample}
\State Generate an induced subgraph $G'$ = ($V$, $E'$)  \label{algo:espar:line:subgen}
\State $C'$ += \Call{ExactCounting}{$\mathcal{G}'$, \pg} 
\Comment{\hlg{Exact counting on $\mathcal{G}'$}} 
\label{algo:espar:line:exact-count}

\State $C = C' \times p^{-l}$ is the estimated count in $\mathcal{G}$ \Comment{\hlg{$l$ is \# of edges in \pg}}

\end{algorithmic}
\end{algorithm}

\subsubsection{Subgraph Sampling} \label{subsect:subgraph-sampling}

The idea of subgraph sampling is to sample a subset of vertices or edges from \dg with some probability $p$, to form a subgraph $G'$, and then do exact search in $G'$, and finally scale the count based on $p$.
One popular way to sample a subgraph is to use the {\it graph sparsification} (\GS) technique~\cite{ApproxMinCut,GFforGraphSpar,gpar-gspar,gspar-resis,gspar-spectral},
which sparsifies \dg by randomly removing some edges.
Bernoulli Edge Sparsification (BES)~\cite{DOULION} is such an example, as shown in \cref{algo:espar}.
For each edge in \dg, include it with probability $p$ (\cref{algo:espar:line:sample}) to get graph $G'$ (\cref{algo:espar:line:subgen}).
For each match $M$ of \pg in \dg, the probability that $M$ exists in $G'$ is $p^l$, hence the expected count in $G'$ is $C' = p^l\times C$.
BES is simple and easy to implement, and can be trivially parallelized.

Color Sparsification~\cite{ColorfulTC} (CS), another \GS approach, sparsifies \dg by first randomly assigning a color from \{1, 2, $\cdots$ , $c$\} to each vertex and then only preserving edges whose two endpoints are in the same color. The probability of choosing an edge is $p= 1/c$, and the probability of choosing a match is $p^{l-1}$.
Hence the expected count in $G'$ is $C' = p^{l-1}\times C$.
Apparently, the probability of preserving a match of \pg is higher than that in BES. In other words, it could meet the same error bound with more edges removed, thus less computation.
The downside is that different edges are not independent of each other any more, which means potentially higher variance. 
Usually a single sampler is used for sparsification, though more samplers can be used to improve confidence.
 

Another way to sample a subgraph is Egonet Sampling~\cite{EgonetMotif,ButterflyCounting}, in which an element (vertex or edge) of \dg is sampled and the \textit{egonet}, i.e., the local neighborhood, of this element is extracted as a subgraph.
It often requires many samplers.

One advantage of subgraph sampling is that a state-of-the-art exact counting algorithm can be applied on the sampled subgraph. 
But the cost is extra time and space to extract the subgraph(s), and the time complexity is still exponential in the size of pattern $k$.
Subgraph sampling has only been used in A-GPM applications for specific patterns~\cite{DOULION,ColorfulTC,SpectralTC,ParallelCC,ButterflyCounting,Five-Cycle,EgonetMotif}.


\subsection{Approximate GPM Systems}

A-GPM systems, such as ASAP~\cite{ASAP} and Arya~\cite{Arya}, have been proposed to simplify A-GPM programming.
These systems provide APIs to the users for them to easily compose various A-GPM applications.
As opposed to those case-by-case customized implementations, an A-GPM system provides a {generalized}, sampling-based approximation method, for arbitrary patterns.
Moreover, instead of hand tuning the key sampling parameters, e.g., the number of samples, in hand-implemented applications,
these systems provide the Error-Latency Profile (ELP) method to automatically choose the values of the sampling parameters for each specific case, e.g., input data graph and pattern, to meet the user specified error bound.

Nevertheless, all the prior systems use fixed sampling schemes to make approximation. 
For example, ASAP uses the \NS scheme, and implements \NS in the \textit{edge streaming} fashion 
~\cite{StreamingTriangleCounting,StreamingCount,StarCounting,EdgeSampling,heavyEdge} where edges are streamed in as a sequence, instead of loaded all at once, to save memory space.
Arya is also based on \NS, but adds pattern decomposition on top of ASAP, 
to reduce the amount of work in each sample for large, easy-to-decompose patterns.
Since both ASAP and Arya are based on \NS, they both suffer the shortcomings of \NS,
which are discussed in detail next.





\section{Understanding Sampling Tradeoffs}
\label{sect:motiv}


\subsection{Termination Condition and Confidence} \label{subsect:unstable-termination}

\begin{figure}[tb]
\centering
\includegraphics[width=0.49\textwidth]{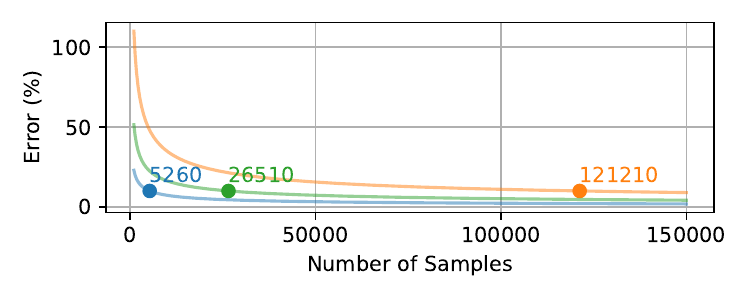}
\vspace{-11pt}
\caption{Three different runs (three curves) of Arya's ELP prediction given the \texttt{LiveJ} graph and \texttt{triangle} pattern. With an error bound of 10\%, the curves give dramatically different prediction on the number of samples $N_s$: 5,260, 26,510 and 121,210. This leads to a 25$\times$ performance difference in the sampling execution phase.}
\label{fig:elp-unstable}
\end{figure}

A major responsibility of an A-GPM system is to decide when the sampling can be terminated with enough confidence to meet the error bound. 
Existing A-GPM systems use error-latency profile (ELP), before the execution of the sampling procedure, to pre-determine the number of samples $N_s$ required.
The execution is then terminated simply when $N_s$ samples have been drawn.
However, we observe that the value of $N_s$ that ELP predicts vary dramatically across different runs of ELP.
\cref{fig:elp-unstable} shows the results of three runs of ELP.
Each curve represents the prediction of one ELP run.
We get predictions of $N_s$ as 5,260, 26,510 and 121,210, respectively.
This huge prediction difference leads to a 25$\times$ performance difference in the sampling execution phase.

More importantly, the ELP method for termination condition adds an indirection between the error estimation and the confidence.
Originally, the required number of samples $N_s$ is derived from the Chernoff bound in ASAP~\cite{ASAP} or Chebyshev’s inequality in Arya~\cite{Arya}.
For example, given $\epsilon$ and $\delta$, the lower bound in Arya is $N_s \geq \frac{K\cdot m\cdot \rho}{C\cdot \epsilon^2\delta}$, where $K$ is a constant, $m=|E|$, $\rho$ is a pattern specific constant, and $C$ is the true count.
The problem is, this bound contains the true count $C$, which is the output that is supposed to be estimated.
Therefore, what ELP does is to essentially first estimate $C$ without theoretical confidence-error ($\delta$-$\epsilon$) guarantees,
then feed it into the lower bound to get $N_s$.
$N_s$ is then used to do sampling and estimate a more accurate $C$. Note that ELP estimates $C$ by sampling in a sparsified graph and iteratively updating the parameters and the estimates until convergence.
Although the bound inequation used to calculate $N_s$ contains $\delta$,
estimating $C$ by ELP \textit{has no involvement of $\delta$}.
It means the estimation of $N_s$ loses connection to the confidence.
The root cause is the \emph{circular dependency} between $C$ and $N_s$, i.e., $C$ is used to estimate $N_s$, and $N_s$ is used to estimate $C$, which is fundamentally unavoidable in the ELP based approach. 

Due to the \emph{circular dependency}, ELP provides only a heuristic rather than a strong theoretical bound.
Another limitation of ELP is its own convergence speed. We observe that in some cases, ELP fails to converge within 10 hours (see details in \cref{tab:clique}).

\begin{figure}[tb]
\centering
\includegraphics[width=0.45\textwidth]{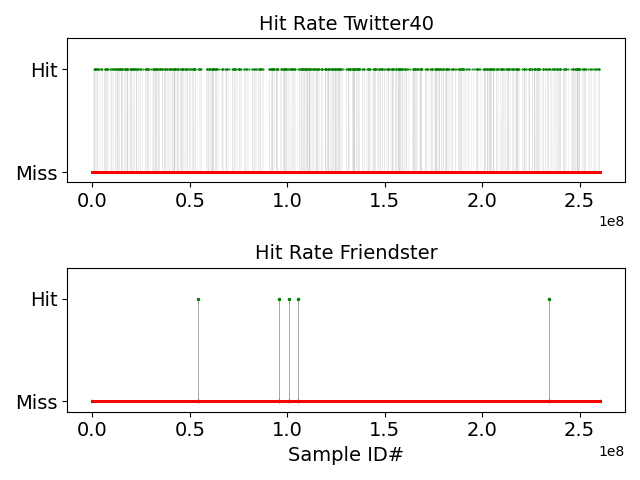}
\caption{Sample hits and misses in Arya,  
on \texttt{Twitter40} (top) and \texttt{Friendster} (bottom) graphs.
The pattern is \texttt{4-clique} for both.
In total $10^8$ samples are drawn in both cases. 
Each green point is a hit sample, while each red point is a miss sample.
For \texttt{Twitter40}, there are 7,033 hits with a hit rate of $7\times 10^{-5}$.
For \texttt{Friendster}, there are only 5 hits with a $5\times 10^{-8}$ hit rate (i.e. needle in the hay).
}
\label{fig:needle}
\end{figure}
\vspace{-3pt} 

\subsection{Characterizing Neighbor Sampling}

For the sampling based approximation approach, the estimation difficulty depends on the density and distribution of the matches of \pg in the graph \dg.
When there are plenty of matches, defined as the \textit{dense case}, it is easier to make estimation than the \textit{sparse case}, where there exist only a few matches.
This is because it is more likely to draw a \textit{successful} sample (i.e., find a match) if there are more matches,
and the confidence to meet an error bound depends on seeing enough number of successful samples.
Note that the extremely sparse case is known as finding the \textit{needle in the hay}. 

We call a sample that finds a match a \textit{hit}, otherwise a \textit{miss}.
\cref{fig:needle} shows how the hits and misses are distributed in $1\times10^8$ samples drawn by Arya~\cite{Arya}, for the \texttt{4-clique} pattern on graphs \texttt{Twitter} (top) and \texttt{Friendster} (bottom). 
The two graphs have similar sizes but quite different degree distribution (see maximum degree in \cref{tab:input}), and thus different hit rates.
There are 7,033 hits in \texttt{Twitter} ($7\times10^{-5}$ hit rate), while only 5 hits appeared in \texttt{Friendster} ($5\times10^{-8}$ hit rate) which is a typical needle-in-the-hay case. 
Since the execution time is roughly linear in $N_s$, this difference in hit rates would result in a $\sim$700$\times$ execution time difference in practice, assuming the same average time per sample.

It is known that \NS works poorly in the sparse case, for example, when \pg is dense and \dg is sparse, and particularly in the case of needle in the hay~\cite{Arya}.
One can expect that in this case, the closure check fails frequently, which means a large number of samples $N_s$ is required to meet a certain error bound.
Moreover, dense patterns are particularly problematic for Arya~\cite{Arya} which decomposes the pattern into sub-patterns, as a dense pattern would be split with many edgecuts, and checking closure is quite expensive.
Arya thus suffers significant slowdown in those cases, and in some extreme cases it is even slower than the exact solution.

\begin{figure}[tb]
\centering

\begin{subfigure}[]{0.23\textwidth}
\includegraphics[width=0.9\textwidth]{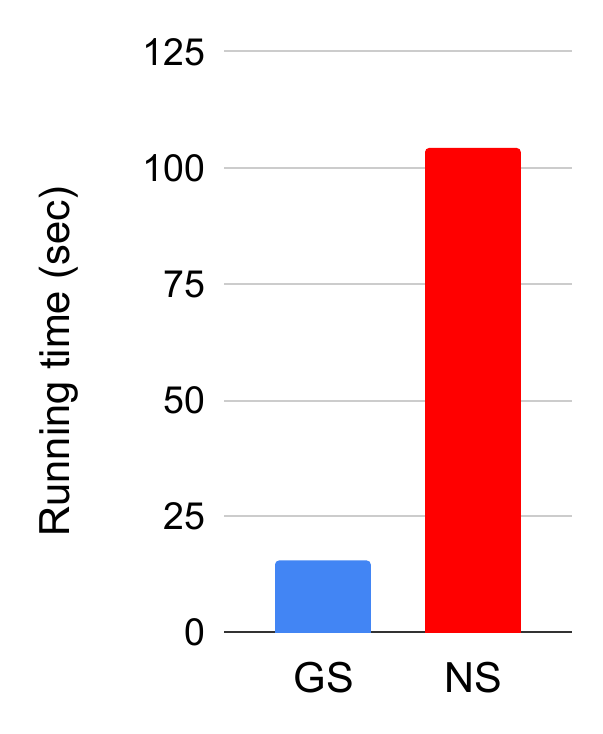}
\vspace{-10pt}
\caption{{\texttt{house} on \texttt{fr} with 0.1\% error}}
\label{fig:fs-house}
\end{subfigure}
\hfill
\begin{subfigure}[]{0.23\textwidth}
\includegraphics[width=0.78\textwidth]{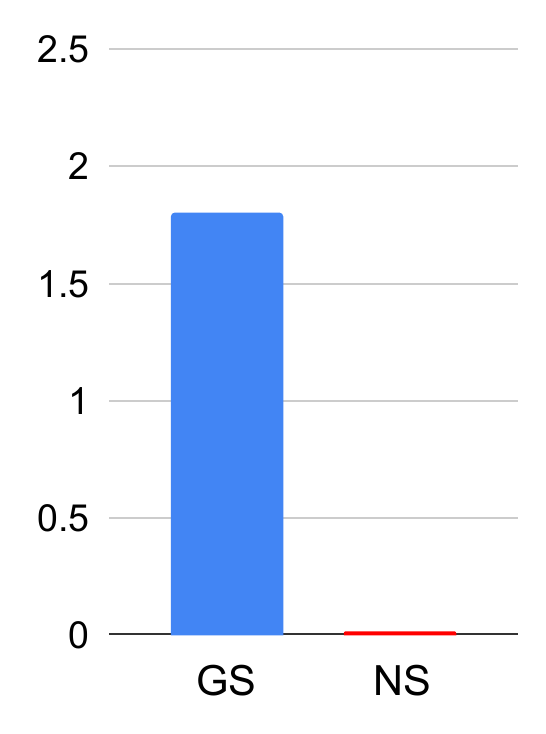}
\vspace{-10pt}
\caption{{\texttt{5path} on \texttt{livej} with 2\% error}}
\label{fig:mico-5path}
\end{subfigure}


\caption{Execution time variance of Neighbor Sampling (\NS) and Graph Sparsification (\GS), under the same error bounds. }
\label{fig:motivation}
\end{figure}

\subsection{Coarse-grain vs. Fine-grain Sampling}

A key difference between \NS and \GS is the granularity of each sample.
Since each sample contains at most one match, we classify \NS to be a \textit{fine-grain} sampling scheme,
as opposite to \textit{coarse-grain} sampling schemes, each of whose samples can contain multiple matches.
For \NS, each sample is in the size of the pattern size $k$, and each sample task is lightweight. But we need a lot of samples, i.e., $N_s \gg 1$, to get a meaningful estimation, since each sample contains at most one match. 
In contrast, subgraph sampling schemes, including \GS, are coarse grained.
In \GS, a sample is a sparsified graph, which could potentially contain many matches in it. 
As the sample granularity in \GS is much larger, it performs better than \NS when given a sparse case, as it is more likely to hit matches in a big region of the graph. However, this advantage comes at the cost of several drawbacks. First, each sample in \GS is a much larger computational task, and the complexity is exponential in the pattern size. Second, as multiple matches appear in the same sample, \GS may yield worse variance than \NS in the worst case.




To summarize, given the distinct characteristics in the input data (graph and pattern), 
none of these sampling schemes can always be the best solution.
\cref{fig:motivation} compares the running time of \NS with \GS, when given the same error bound. 
On the left we mine the \texttt{house} pattern on the \texttt{Friendster} graph with an error bound of 0.1\%, 
where \GS is 6.7$\times$ \textbf{faster} than \NS.
On the right we mine the \texttt{5-path} pattern in the \texttt{Livej} graph with an error bound of 2\%, 
where \GS is 10$\times$ \textbf{slower} than \NS.

\section{Proposed Mechanisms} \label{sect:main-ideas}

To achieve stable termination with confidence, we propose a novel on-the-fly convergence detection mechanism in \cref{subsect:converge} that is fundamentally different from the existing ELP approach. 
To improve \hrate in \NS, we introduce \eager and prove it unbiased in \cref{subsect:ns-prune}.
To further improve performance in handling needle-in-the-hay cases, 
we propose a hybrid method that adaptively selects the best-performing sampling scheme.
For scheme selection, we establish cost models (a.k.a. performance models) for the \NS (\cref{subsect:ns-perf-model}) and \GS (\cref{subsect:gs-perf-model}) scheme to estimate their execution time, and select the faster one.
We only focus on the two schemes, but this hybrid method can be extended to include other schemes in the future.

\subsection{Online Convergence Detection} \label{subsect:converge}




Due to the circular dependency issue (\cref{subsect:unstable-termination}), the ELP method breaks the theoretical guarantee on confidence.
Also, it is fundamentally hard for ELP to establish confidence because it is done before execution, and there is little information we can leverage.
Therefore, we propose an on-the-fly approach to establish confidence.
This is based on our observation in \NS sampling procedures, that the estimation errors tend to converge over time (see \cref{fig:ns-base-ns-prune}).
Instead of predetermining the required number of samples offline (i.e. before execution),
we detect online if the estimates have converged, and then terminate the execution subsequently.
However, convergence detection is not trivial. 
A straightforward method is to check if the difference of the error curve is small enough, i.e., within a fixed threshold. 
But this does not work because the termination point depends on the user defined confidence and error bound.
The key challenge is then how to decide termination to meet error bound with confidence.

To address it,
we propose to {predict} the error online periodically with confidence, and terminate when the predicted error is below the error bound.
To make predictions with confidence, we collect online statistics during execution,
which allows us to formally {derive} predicted errors based on the probability theory.
Our key insight is that confidence can be established by the normal distribution of sampled means (formally proved in \cref{thm:error}), as shown in \cref{fig:ns-normal}.
Specifically, we keep track of the \textit{mean} $\mu$ of the estimated counts and the \textit{standard deviation} $\sigma$ of the means, 
and then use $\mu$, $\sigma$ and $\delta$ to compute a relative error $\hat{\epsilon}$ using \cref{equ:error}, where $\Phi^{-1}$ is inverse of the cumulative distribution function of the standard normal.
Note that $\mu$ is also used as the estimate of the true count $C$.

\vspace{-5pt}
\begin{equation} \label{equ:error}
\hat{\epsilon} = \frac{\Phi^{-1}(1-\frac{\delta}{2})\sigma}{\mu}
\end{equation}
\vspace{-5pt}

When we detect that $\hat{\epsilon}$ is below the user specified error bound $\epsilon$, according to \cref{thm:error}, we can safely terminate the sampling, and conclude that $\mu$ is an estimate of $C$, under an error bound $\epsilon$ with a confidence of $1-\delta$.
We refer our approach as \NS-online.

\begin{theorem} \label{thm:error}
    Given $\delta$, $n$ samples $X_1,\dots,X_n$ drawn by using the \NS sampling scheme, and the mean of sampled counts $\mu = \frac{1}{n}\sum_{i=1}^n X_i$, let $C$ be the true count and $\hat{\epsilon}$ be the estimated error computed by \cref{equ:error}. As $n \to \infty$, the probability of the true relative error being smaller than the estimated relative error is $\probP\left(\frac{|\mu-C|}{C} < \hat{\epsilon}\right) = 1-\delta$.
\end{theorem}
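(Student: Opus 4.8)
The plan is to recognize \cref{equ:error} as a restatement of the classical Central Limit Theorem (CLT) confidence interval and to make the $n\to\infty$ limiting statement precise. First I would set up the i.i.d. structure: the $X_i$ are independent and identically distributed, since each sampler in \cref{algo:ns} draws its first edge uniformly and then its neighboring edges independently of the other samplers, so the $C_i$ (here written $X_i$) are i.i.d. copies of a single random variable $X$. The key fact, which I would establish from the analysis already given in the excerpt ($\Pr[\hat M] = \frac{1}{m\prod_2^{k-1}c_j}$ and the scaling by $\alpha$), is that $\mathbb{E}[X] = C$: the scaled single-sample estimator is unbiased for the true count. Let $\sigma_X^2 = \mathrm{Var}(X)$, which is finite because $X$ takes finitely many values (bounded by the largest possible $\alpha$ over all samples).

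Next I would invoke the CLT: $\sqrt{n}\,(\mu - C)/\sigma_X \xrightarrow{d} Z$, where $Z \sim \mathcal{N}(0,1)$ and $\mu = \frac1n\sum_{i=1}^n X_i$. Equivalently, for any fixed threshold $t>0$,
\begin{equation}
\lim_{n\to\infty}\Pr\!\left(\frac{|\mu - C|}{\sigma_X/\sqrt{n}} < t\right) = \Pr(|Z| < t) = 2\Phi(t) - 1.
\end{equation}
Choosing $t = \Phi^{-1}(1 - \delta/2)$ makes the right-hand side exactly $1 - \delta$. Now I would connect this to the computable quantity $\hat\epsilon$ of \cref{equ:error}. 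The sample standard deviation of the mean, $\sigma$, is the empirical estimate of $\sigma_X/\sqrt{n}$; by the strong law of large numbers the empirical variance of the $X_i$ converges almost surely to $\sigma_X^2$, so $\sigma \cdot \sqrt{n} \to \sigma_X$ a.s. (equivalently $\sigma/(\sigma_X/\sqrt n)\to 1$). Combining this with the CLT via Slutsky's theorem, the event $\{|\mu - C| < \Phi^{-1}(1-\delta/2)\,\sigma\}$ has the same limiting probability as $\{|\mu-C| < \Phi^{-1}(1-\delta/2)\,\sigma_X/\sqrt n\}$, namely $1-\delta$. Finally, since $\mu \to C > 0$ a.s., dividing through by $\mu$ versus by $C$ does not change the limit, and $\hat\epsilon = \Phi^{-1}(1-\delta/2)\,\sigma/\mu$, so $\Pr\big(|\mu - C|/C < \hat\epsilon\big) \to 1-\delta$, which is the claim.

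The main obstacle is handling the \emph{replacement of the true standard deviation by its empirical estimate} rigorously: the statement is about the random, data-dependent quantity $\hat\epsilon$, not the deterministic CLT width $\Phi^{-1}(1-\delta/2)\sigma_X/\sqrt n$. This is exactly the point where Slutsky's theorem (together with the law of large numbers for the empirical variance) is needed, and it is also why the theorem is only an asymptotic ($n\to\infty$) statement rather than a finite-sample guarantee — for finite $n$ one would incur the usual $t$-distribution correction and, more seriously, the CLT approximation error, which in the needle-in-the-hay regime (tiny hit rate, heavy-tailed $\alpha$) can be substantial. I would flag that the practical validity of the bound rests on the sample size being large enough for the sampled-mean distribution to be approximately normal, consistent with the empirical observation in \cref{fig:ns-normal}. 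Everything else — unbiasedness of the per-sample estimator, finiteness of the variance, the arithmetic with $\Phi^{-1}$ — is routine.
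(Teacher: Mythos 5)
Your proposal is correct and follows essentially the same route as the paper's proof: unbiasedness of the \NS estimator gives $\mathbb{E}[X_i]=C$, the CLT gives asymptotic normality of $\mu$, the law of large numbers justifies replacing the true variance by the empirical one, and Slutsky's theorem (together with $\mu/C \to 1$) handles the data-dependent width $\hat{\epsilon}$ before plugging in $x=\Phi^{-1}(1-\tfrac{\delta}{2})$. Your added remarks on finite-sample caveats and the finiteness of $\mathrm{Var}(X)$ are sound but do not change the argument.
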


\begin{figure}[tb]
\includegraphics[width=0.4\textwidth]{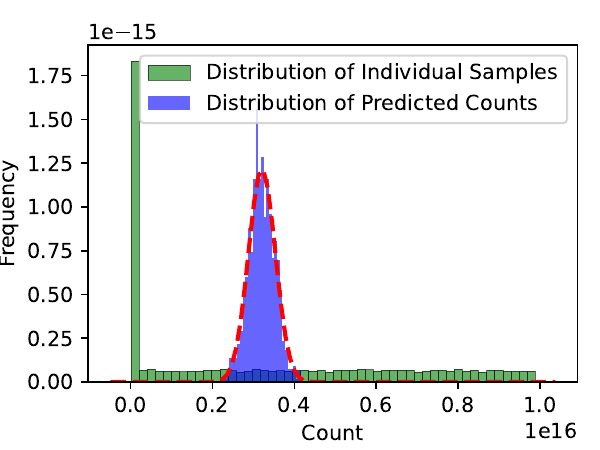}
\caption{ The normal distribution of the means of sampled counts (i.e. our predicted counts) using neighbor sampling (\NS). We ran \NS to collect $10^6$ samples on \texttt{LiveJ}, 4-\texttt{clique}. We obtained a predicted count by taking the mean of a random subset of 100 of these underlying samples. 
We simulated 1000 of these predicted counts. 
Although the underlying distribution of the sampled counts (green bars) is not a normal distribution, their means (purple bars), which are our predicted counts, do follow a normal distribution (dashed red line).
}
\label{fig:ns-normal}
\end{figure}

\begin{proof}[Proof]
Let $D$ be the distribution $X_1,\dots,X_n$ are sampled from. Since the \NS estimator is unbiased,
the true count $C$ is the mean of the distribution $D$.
Our estimator of $C$ is $\mu := \frac{1}{n}\sum_{i=1}^n X_i$. This estimator satisfies $\mathbb{E}[\mu] = \mathbb{E}[\frac{1}{n}\sum_{i=1}^n X_i] = \frac{1}{n}\sum_{i=1}^n \mathbb{E}[X_i] = \frac{1}{n}\sum_{i=1}^n C = C$ and $\mathbb{V}ar[\mu] = \mathbb{V}ar[\frac{1}{n}\sum_{i=1}^n X_i] = \frac{1}{n^2}\sum_{i=1}^n \mathbb{V}ar[X_i] = \frac{1}{n}\mathbb{V}ar[X_1]$. Because $\mu$ is the average of $n$ samples from a distribution $D$ (which clearly has finite mean and variance), the central limit theorem (CLT) applies, so $\mu$ follows the normal distribution.
Formally, $\frac{\mu-\mathbb{E}[\mu]}{\sqrt{\mathbb{V}ar[\mu]}}$ converges in distribution to a standard normal as $n \to \infty$. 

By the law of large numbers, as $n \to \infty$, the sample variance $\frac{1}{n}\sum_{i=1}^n X_i^2 - \mu^2$ converges in probability to $\mathbb{V}ar[X_1]$. Letting $\sigma^2 := \frac{1}{n}\left(\frac{1}{n}\sum_{i=1}^n X_i^2 - \mu^2\right)$, this means that $\frac{\mathbb{V}ar[\mu]}{\sigma^2} = \frac{\frac{1}{n}\mathbb{V}ar[X_1]}{\frac{1}{n}\left(\frac{1}{n}\sum_{i=1}^n X_i^2 - \mu^2\right)} = \frac{\mathbb{V}ar[X_1]}{\left(\frac{1}{n}\sum_{i=1}^n X_i^2 - \mu^2\right)}$ converges in probability to $1$. Additionally, $\mu$ converges in probability to $C$, so $\frac{\mu}{C}$ converges in probability to $1$. Therefore, by Slutsky's Theorem, $\frac{\mu-\mathbb{E}[\mu]}{\sqrt{\mathbb{V}ar[\mu]}} \cdot \sqrt{\frac{\mathbb{V}ar[\mu]}{\sigma^2}} \cdot \frac{\mu}{C} = \frac{\mu-C}{C} \cdot \frac{\mu}{\sigma}$ converges in distribution to a standard normal. This means that for any fixed $x$, we have
\[
\probP\left(\frac{\mu-C}{C} \cdot \frac{\mu}{\sigma} < x\right) \to \Phi(x)\text{ as }n \to \infty.
\]
This implies that
\[
\probP\left(\left|\frac{\mu-C}{C} \cdot \frac{\mu}{\sigma}\right| < x\right) \to 2\Phi(x)-1\text{ as }n \to \infty.
\]
Plugging in $x = \Phi^{-1}(1-\frac{\delta}{2})$, we get
\[
\probP\left(\left|\frac{\mu-C}{C} \cdot \frac{\mu}{\sigma}\right| < \Phi^{-1}(1-\frac{\delta}{2})\right) \to 2\Phi\left(\Phi^{-1}(1-\frac{\delta}{2})\right)-1\text{ as }n \to \infty.
\]
Rearranging terms and simplifying yields
\[
\probP\left(\left|\frac{\mu-C}{C}\right| < \frac{\Phi^{-1}(1-\frac{\delta}{2})\sigma}{\mu}\right) \to 2\left(1-\frac{\delta}{2}\right)-1\text{ as }n \to \infty
\]
\[
\implies \probP\left(\left|\frac{\mu-C}{C}\right| < \hat{\epsilon}\right) \to 1-\delta\text{ as }n \to \infty.
\]
\end{proof}

\algdef{SE}[DOWHILE]{Do}{doWhile}{\algorithmicdo}[1]{\algorithmicwhile\ #1}%

\begin{algorithm}[t]
\footnotesize
\caption{\NS-online convergence detection}
\label{algo:ns-online}
\begin{algorithmic}[1]

\State sum $\gets 0$, squaredSum $\gets 0$, $n=0$, $W \gets N_{min}$

\While{$\hat{\epsilon}$ > $\epsilon$} \label{algo:ns-online:line:comp_error}

\For{{\bf each} sampler $i \in$ [1, $W$]}  \Comment{$W$ is the window size}  \label{algo:ns-online:line:period}

\State $X_i \gets \Call{DrawASample}{ }$  \Comment{$X_i$ is the $i$-th sampled count}  \label{algo:ns-online:line:each-sampler}

\State sum $\gets$ sum $+ X_i$   \Comment{$\sum_{i=1}^n X_i$} \label{algo:ns-online:line:sum}

\State squaredSum $\gets$ squaredSum $ + X_i * X_i$  \Comment{$\sum_{i=1}^n X_i^2$} \label{algo:ns-online:line:squaredSum}

\State $n \gets n + 1$

\EndFor

\State $\mu \gets $ sum / $n$ \Comment{mean of sampled counts}

\State var $\gets$ squaredSum / $n - \mu * \mu$ \Comment{variance of sampled counts}

\State $\sigma \gets$ \texttt{sqrt}(var/$n$)   \Comment{standard deviation} \label{algo:ns-online:line:std_dev}

\State $\hat{\epsilon} \gets \Phi^{-1}(1-\frac{\delta}{2}) * \sigma / \mu$ \Comment{predicted error} \label{algo:ns-online:line:pred_error}


\EndWhile

\end{algorithmic}
\end{algorithm}

\cref{algo:ns-online} shows the \NS-online algorithm.
Each time we draw a sample (\cref{algo:ns-online:line:each-sampler}),
the only information we need to keep track of is the accumulated sum $\sum_{i=1}^n X_i$ (\cref{algo:ns-online:line:sum}) and the accumulated squared sum $\sum_{i=1}^n X_i^2$ (\cref{algo:ns-online:line:squaredSum}).
At the end of each interval (i.e. $W$ samples in \cref{algo:ns-online:line:period}), we compute the standard deviation $\sigma$ (\cref{algo:ns-online:line:std_dev}) and predict the error $\hat{\epsilon}$ (\cref{algo:ns-online:line:pred_error}).
If $\hat{\epsilon}$ is below the user's error bound (\cref{algo:ns-online:line:comp_error}), sampling is then terminated and it reports the estimated count $\mu$.

\subsection{\Eager for Neighbor Sampling}
\label{subsect:ns-prune}

A major drawback of \NS in prior systems is the low \hrate in dealing with sparse cases.
By looking at individual samples, we find that most of these samples fail to pass the closure check (\cref{algo:ns:line:closure} in \cref{algo:ns}). 
Therefore we looked deeper into the failures (i.e. missed samples). 
Our key observation is that many of the failures have been unpromising candidates even at the early stage of the sampling. 
For example, if we search for 6-\texttt{clique}s, and the first four vertices in the sample does not form a 4-\texttt{clique}, this sample is impossible to form a 6-\texttt{clique}.
Therefore, in general, the low hit rate in ASAP and Arya is because the closure check is delayed to the very last step, which we refer to as \textit{\lazy}.
Based on this understanding, we propose a \textit{\eager} approach to improve \NS performance. 
The key idea is to sample from promising candidates by verifying the pattern's connectivity constraints as early as possible. 
This strategy has two advantages. First, since unpromising candidates are pruned at early stage, and we only sample from promising candidates, each sample is more likely to succeed, i.e., hit a match. Second, if a sample starts from an edge whose neighborhood contains no or very few matches (unlikely to hit), 
\eager can minimize the work as this sample would fail at its early stage, while \lazy will have to proceed until the end and fail.  


The challenge in implementing \eager is how to avoid unpromising candidates but still retain unbiasedness in sampling.
Based on the subgraph tree abstraction (\cref{subsect:agpm}), our key finding is that each leaf in the tree corresponds to a unique path. As long as the pruning does not change this one-to-one mapping, we can prove that the sampler is unbiased.
We find that two typical pruning techniques, symmetry breaking~\cite{GraphZero} and matching order~\cite{G2Miner}, meet this requirement. 
There exist other pruning schemes in the literature~\cite{DecoMine,WCOJ}, 
which could be applied as well, but we leave this as a future work.
We refer \NS used in ASAP as \textit{\NS-base}, and \NS with the two pruning techniques as \textit{\NS-prune}. 
We first give an example to show how pruning avoids unpromising candidates, 
and then prove \NS-prune an unbiased estimator in \cref{lemma:ns-prune}.
\cref{algo:ns-prune-4cycle} shows the pseudo-code for finding \texttt{4-cycle} using \NS-prune.
In \cref{algo:ns-prune:line:v3} we compute a set intersection $N(v_0)$ \& $N(v_2)$ which is the candidate set of the third vertex $v_3$, because $v_3$ is a common neighbor of $v_0$ and $v_2$ in the \texttt{4-cycle} pattern.
In contrast, in \NS-base, because both $v_0$'s and $v_2$'s neighbors are possible candidates, $v_3$ is sampled from set union $N(v_0) \cup N(v_2)$, and in the final step it checks closure between $v_3$ and $v_2$ if $v_3$ is from $v_0$'s neighborhood, otherwise it checks closure between $v_3$ with $v_0$.
If the closure check fails, it is a miss.
However, in \NS-prune the closure check is unnecessary because $v_3$ is guaranteed to be connected to $v_0$ and $v_2$, 
and thus much more likely to hit a match.

\begin{algorithm}[t]
\footnotesize
\caption{\NS-prune for \texttt{4-cycle}}
\label{algo:ns-prune-4cycle}
\begin{algorithmic}[1]

\For{{\bf each} sampler $i \in$ [1, $N_s$]} \label{algo:ns:line:each-sampler}

\State $e(v_0, v_1) \gets$  \Call{sample}{$\mathcal{E}$} , $\alpha \gets 0$ \Comment{\hlg{sample} an edge $(v_0, v_1)$} \label{algo:ns-prune:line:sample-first-edge}
\State $A \gets$ $N(v_1)$ - \{$v_0$, $v_1$\}, bound by $v_0$ \Comment{\hlg{set difference}}  \label{algo:ns-prune:line:v2} 
\If{|A| = 0} break\EndIf
\State $v_2 \gets$ \Call{sample}{$A$} \Comment{\hlg{sample} node $v_2$ from set $A$} \label{algo:ns-prune:line:sample-v2} 

\State $B \gets$ $N(v_0)$ \& $N(v_2)$, bound by $v_1$ \Comment{\hlg{set intersection}} \label{algo:ns-prune:line:v3} 

\If{|B| = 0} break\EndIf
\State $v_3\gets$ \Call{sample}{$B$}  \Comment{\hlg{sample} node $v_3$ from set $B$} \label{algo:ns-prune:line:sample-v3}
\State $X_i \gets m * |A| * |B|$  \Comment{sampled count}
\EndFor
\end{algorithmic}
\end{algorithm}

\begin{lemma} \label{lemma:ns-prune}
  \NS-prune is an unbiased estimator.  
\end{lemma}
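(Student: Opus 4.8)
The plan is to show that for each match $M$ of \pg in \dg, the probability that a single sampler outputs a nonzero count equals exactly the reciprocal of the scaling factor $\alpha$ it assigns, so that $\mathbb{E}[X_i] = C$. First I would invoke the subgraph-tree abstraction from \cref{subsect:agpm}: a run of \NS-prune corresponds to descending a root-to-leaf path in the pruned subgraph tree, where at level $j$ the sampler picks vertex $v_j$ uniformly from a candidate set whose size I will call $c_j$ (e.g.\ $|A|$, $|B|$ in \cref{algo:ns-prune-4cycle}), after first picking an edge uniformly from the $m$ edges. The key structural claim — which I would state and justify as the heart of the proof — is that symmetry breaking and matching-order pruning preserve a \emph{one-to-one correspondence} between leaves of the pruned tree and matches of \pg (up to the fixed automorphism representative): pruning only removes candidates that are forced to be invalid or that would produce an automorphic duplicate, so no match is lost and no match is reachable by two distinct paths. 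Given this bijection, the event ``sampler hits $M$'' is exactly the event ``sampler follows the unique path to the leaf labeled $M$,'' whose probability is $\Pr[M] = \frac{1}{m}\prod_{j} \frac{1}{c_j(M)}$, where $c_j(M)$ are the candidate-set sizes along that path.

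Next I would compute the expectation. On a hit of $M$ the sampler outputs $X_i = \alpha(M) = m \prod_j c_j(M)$ and otherwise outputs $0$, so
\[
\mathbb{E}[X_i] = \sum_{M} \Pr[M]\cdot \alpha(M) = \sum_{M} \frac{1}{m\prod_j c_j(M)} \cdot m\prod_j c_j(M) = \sum_{M} 1 = C,
\]
where the sum ranges over the $C$ matches (one per tree leaf). Averaging over $N_s$ independent samplers, $\mathbb{E}[C'] = \mathbb{E}[\frac{1}{N_s}\sum_i X_i] = C$, establishing unbiasedness. I would also note the minor point that if the sampler breaks early because some candidate set is empty ($|A|=0$ or $|B|=0$), it outputs $0$, which is consistent with the above since such a path reaches no leaf.

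The main obstacle is rigorously establishing the one-to-one leaf–match correspondence under pruning — in particular arguing that matching-order pruning never discards a candidate that lies on the path to a genuine match (it only reorders and restricts to the connectivity-forced intersection sets) and that symmetry breaking removes exactly the non-representative automorphic copies so that each orbit of matches contributes exactly one leaf. Everything after that (the probability factorization and the telescoping expectation) is routine. A secondary subtlety worth a sentence is that the candidate-set sizes $c_j(M)$ are themselves random across samples only through which match/path is taken; conditioned on the path they are deterministic functions of \dg, so the cancellation $\Pr[M]\cdot\alpha(M)=1$ is exact rather than in expectation.
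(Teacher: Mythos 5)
Your proposal is correct and follows essentially the same route as the paper's proof: both establish that each match corresponds to a unique sampling path (via symmetry breaking and the matching order), compute its hit probability as $\frac{1}{m}\prod_j \frac{1}{c_j}$, and observe that this exactly cancels the scaling factor $\alpha$ so each match contributes $1$ in expectation, summing to $C$. Your explicit flagging of the one-to-one leaf--match correspondence as the crux, and of the zero contribution from early breaks, is a slightly more careful accounting of the same argument the paper gives.
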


\begin{proof}
Let $(v_0,\dots,v_{k-1})$ be the vertices of an occurrence, i.e., a \textit{match}, of the pattern \pg in the matching order (e.g. for \texttt{4-cycle}, $(v_0,v_1),(v_1,v_2),(v_2,v_3),(v_3,v_0) \in E$). 
Each match corresponds to a unique $k$-tuple $(v_0,\dots,v_{k-1})$, e.g., for \texttt{4-cycle}, we enforce $v_0 = \max(v_0,v_1,v_2,v_3)$ and $v_3 < v_1$. During the execution of \NS-prune, the probability that $v_0$ and $v_1$ are chosen is $1/m$
(see \cref{algo:ns-prune:line:sample-first-edge} in \cref{algo:ns-prune-4cycle}), 
as all edges are equally likely. 
The probability that $v_2$ is chosen (see \cref{algo:ns-prune:line:sample-v2}) is $1/|A|$ as $v_2 \in A$. In general, the probability that $v_i$ is chosen is $1/|S_i|$, where $S_i$ is the candidate set that $v_i$ is drawn from. Therefore the probability that this particular match is sampled by \NS-prune is the product of these probabilities (e.g. $1/(m \cdot |A| \cdot |B|)$ for \texttt{4-cycle}). The scaling factor $\alpha$ is the inverse of this probability, so the expected contribution of this match to the estimated count for one sampler is $\frac{1}{\alpha} \cdot \alpha = 1$. 
Let $C$ be the true count of \pg in \dg, and let $X_{ij} = \alpha = (m \cdot |S_2| \cdots |S_{k-1}|)$ if the $i$th sample hits the $j$th match of \pg, 
otherwise $X_{ij}=0$, where $1 \le i \le N_s, 1 \le j \le C$. The estimated count is
$\mathbb{E}\left(\frac{1}{N_s}\sum_{i,j} X_{ij}\right) = \frac{1}{N_s}\left(\sum_{i,j} \mathbb{E}(X_{ij})\right) = \frac{1}{N_s}\sum_{i,j} 1 = C$.
Therefore \NS-prune is an unbiased estimator.
\end{proof}

\begin{figure}[t]
\centering
\includegraphics[width=0.36\textwidth]{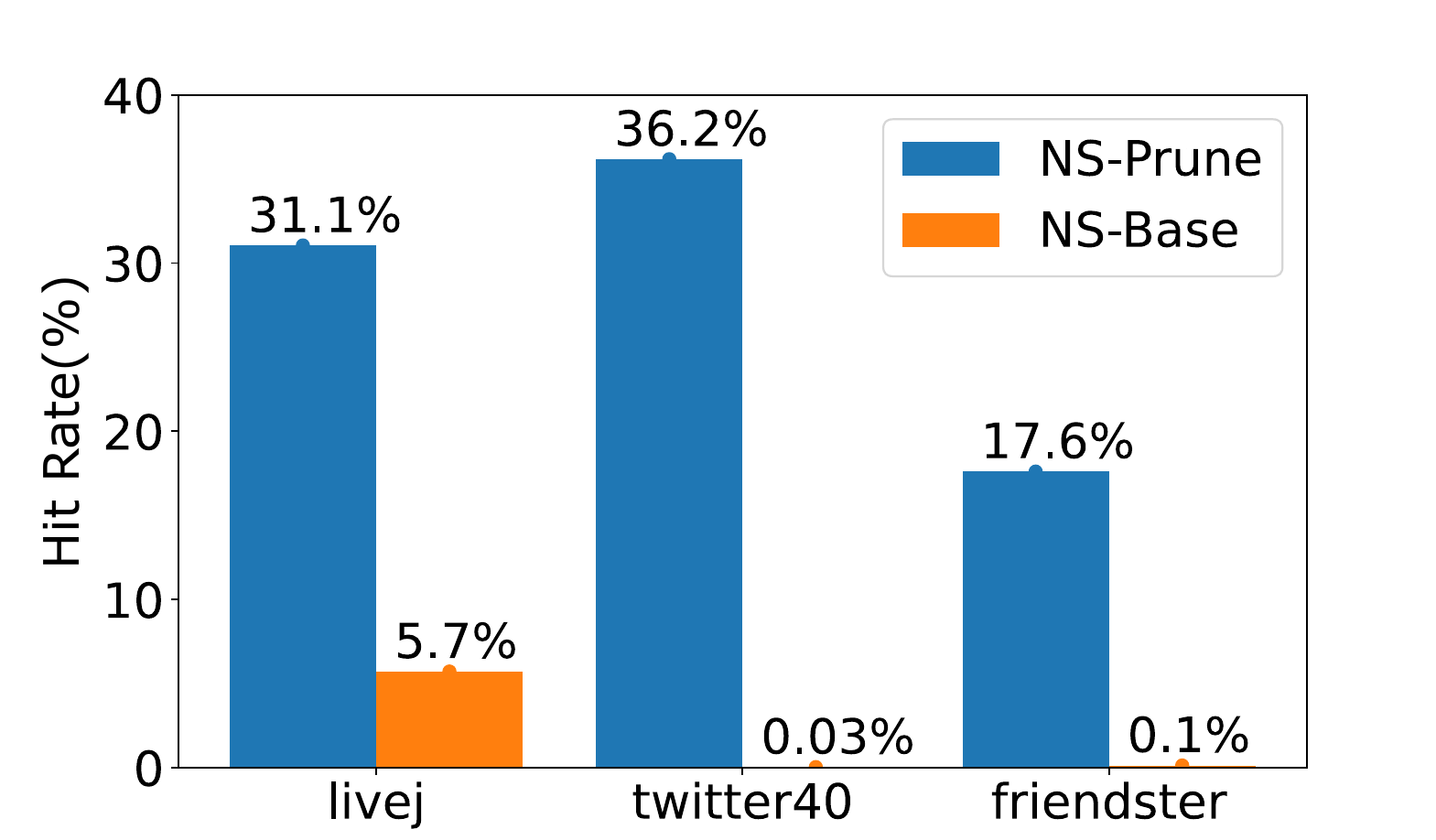}
\caption{Comparing the \hrate of \NS-prune with \NS-base, with the 4-\texttt{clique} pattern on various graphs. }
\label{fig:successrate}
\end{figure}

Note that the scaling factor in \NS-prune tends to be much smaller than that in \NS-base, because it involves the sizes of intersection instead of union sets. This results in lower variances and more stable and faster convergence, as we will show in \cref{sect:eval}.


\subsection{Cost Model for Neighbor Sampling}
\label{subsect:ns-perf-model}

In \NS, the total work is $\sum_{i=1}^{N_s}W_i$, where $W_i$ is the work of the $i$-th sample $s_i$. 
Given that $N_s \gg 1$ in the \NS scheme,
it is reasonable to assume that the \NS execution time is linear in the number of samples $N_s$.
So we can predict the execution time as $t_{tot}=t_{avg}*N_s=c*W_{avg}*N_s$, where $W_{avg}$ is the average work per sample, and $c$ is a hardware specific constant factor to translate work to time.
We estimate $N_s$ by profiling~\cref{subsect:profiler}.
As this is only used for performance prediction, it does not affect error and confidence.
To estimate $c$, a simple profiler can be run on each hardware machine to determine this constant scale factor. 
This profiling overhead is negligible, as it can be determined by running only once per machine or once per graph on only a small number of points.

The challenge, however, is to estimate $W_{avg}$ for the given \dg and \pg, 
as $W_i$ varies for different samples. 
In fact $W_i$ depends on the local neighborhood structure of $s_i$.
More specifically, the task of each sample is a sequence of set operations and sample operations, for example, see \cref{algo:ns-prune-4cycle}.
Given a specific pattern \pg, this sequence of operations is fixed, 
but each set operation in the sequence may take different (worse-case) time depending on the cardinality of the input sets, which overall depends on the structure of \dg.
So it is difficult to get an accurate estimation of $W_{avg}$, i.e., the slope the linear relationship, without really running \NS.





At each \texttt{break} point (e.g., \cref{algo:ns-base:line:breakA} in \cref{algo:ns-4cycle}), we must consider the possibility of an early-exit from the sampling procedure. Each \texttt{break} point is triggered based on the probability of being empty based on the candidate set size. However, it can be hard to predict this probability in practice, as certain graphs have many dense clustered areas, when sampled in few breakpoints occur, whereas others frequently terminate early.
To address this, we model the performance as a \textit{performance cone} (see example in \cref{fig:perf-predict-ns}) consisting of two slopes instead of one, where the performance is upper bounded by none of the breakpoints triggering (completing the full work of the sampling procedure) and lower bounded by the first \texttt{break} point. 

\subsection{Cost Model for Graph Sparsification} \label{subsect:gs-perf-model}

\begin{figure}[tb]
\centering
\includegraphics[width=0.41\textwidth]{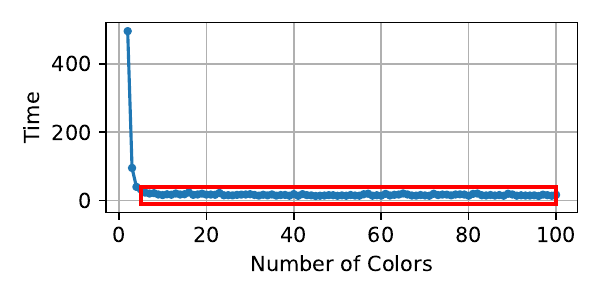}
\vspace{-6pt}
\caption{The execution time for 6\texttt{clique}-\texttt{Friendster} using Color Sparsification under different numbers of colors. Each point is one run. Red region is the stabilization window. }
\label{fig:color-speed}
\end{figure}



\GS running time contains two parts, one is the \textbf{preprocessing} time to generate the sparsified graph $G'$, the other is the time spent on \textbf{exact search} in the sparsified graph $G'$.

Preprocessing involves looping over every edge in order to remove edges.
Therefore, the total work is $|E| \cdot (w_1+p\times w_2)$, where $p$ is the probability that an edge is kept, $w_1$ is the work on every edge, $w_2$ is the work on kept edges.
In our implementation $w_1$ is one read operation, $w_2$ is one write operation.

For exact search in $G'$, our estimation is again based on set operations.
But instead of a sequence of operations in \NS,
the work in \GS consists of nested loops, 
each of which corresponds to one vertex in the pattern and iterates over the candidates of that vertex.
The candidate vertex set is computed by set operations.
If we describe a GPM algorithm as a sequence of nested for loops $M = (X_1, \dots ,X_n)$, which refine the candidate set to a possible match. Each nested loop can be described as $X_j = (o_j, i_j)$. Where $o_i$ is the number of operations performed in the inner body of that loop to refine the candidate set. $i_j$ is the number of iterations for the $j$th loop. $i_j = |Z_i|$, the size of the candidate set at the $j$th level. $o_i$ is the work of the operations to generate the candidate set $|Z_{i+1}|$.
Then the total work can be described as
$ \prod_{j}^{i_{j}}o_j = \prod_{j}^{|Z_{j}|}o_j $.
To estimate $i_j$ and $o_j$, we need to estimate the cardinality of each candidate set $Z$.

If $Z=A - B$, i.e., set difference, $|Z|$ is bounded by $|A|$, which can be simply estimated as the average degree, $|V|/|E|$. 
If $Z=A \cap B$, i.e., set intersection, 
$|Z|$ is bounded by $\min(|A|, |B|)$.
To get a better bound, we can use the method in GraphPi~\cite{GraphPi} and estimate $|Z|$ as $|V| \cdot p_1 \cdot p_2^{n-1}$, 
where $p_1 = \frac{2\cdot | E|} {|V|^2}$, 
$p_2 = \frac{T \cdot |V|}{2 \cdot |E|^2}$ and $T$ is the triangle count in \dg.
Intuitively, $p_1$ is the probability of any pair
of vertices being neighbors.
$p_2$ is the probability of any pair of vertices in a
neighborhood being directly connected to each other. 
To use this estimation for \GS, we need to make the following adjustments.
Note that in a sparsified graph $G'$, $|V'| = |V|$ and $|E'| = |E|\times p$, 
and we have $T'$ = $T\times p^2$ as discussed in \cref{subsect:subgraph-sampling}.
Then if $Z=A - B$, $|Z|$ is estimated as $\frac{|V|}{|E| \cdot p}$.
If $Z=A \cap B$, $|Z|$ is estimated as $|V| \cdot p_1 \cdot p_2^{n-1}$, where $p_1 = \frac{2\cdot | E|\cdot p} {|V|^2}, p_2 = \frac{T \cdot |V|}{2 \cdot |E|^2 \cdot p^4}$.


\section{System Design and Implementation}

We give an overview of the \sys system in \cref{subsect:sys-overview}, describe details of the \GS engine in \cref{subsect:treadoff-gs} and our proposed profiling mechanism in \cref{subsect:profiler}, 
and other implementation details in \cref{subsect:impl}.

\begin{figure}[tb]
\centering
\includegraphics[width=0.49\textwidth]{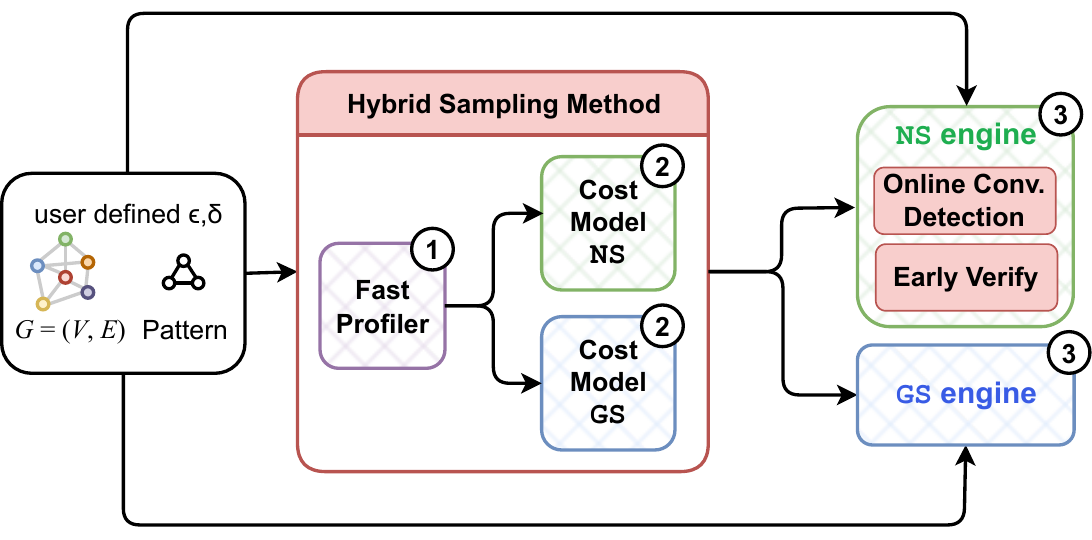}
\caption{\sys system overview. 
The three red boxes are our proposed novel techniques: online convergence detection, early pruning and hybrid sampling.
\NS: neighbor sampling. \GS: graph sparsification. 
The system execution flow is
{\ding{172} fast profiler estimates input parameters (e.g., \texttt{\#colors}, \texttt{\#samples}), 
\ding{173} cost models predict performance and select from \NS and \GS sampling schemes,
and \ding{174} the selected (\NS or \GS) engine is invoked to conduct sampling. }}
\label{fig:sys-overview}
\end{figure}

\vspace*{-3pt}
\subsection{System Overview and Interface} \label{subsect:sys-overview}

\cref{fig:sys-overview} illustrates the major components in our system. \sys is composed of a fast profiler, two cost models and two execution engines for \NS and and \GS respectively. 
The cost models have been described in \cref{subsect:ns-perf-model} and \cref{subsect:gs-perf-model} respectively.
The \NS engine is enhanced with our novel convergence detection mechanism (\cref{subsect:converge}) and is significantly improved by our proven unbiased optimizations (\cref{subsect:ns-prune}).
Our \GS engine is the first generalized color sparsification for arbitrary patterns,
as all prior \GS-based work are customized for specific patterns, e.g., triangle.
To generalize the \GS approach for an arbitrary pattern \pg, we need to (1) generate a pattern specific exact search program, (2) determine the scaling factor for \pg and (3) determine the values of its key parameter, i.e., the sparsify probability $p$ or the number of colors $c$  ($c=\frac{1}{p}$).
For (1) we can leverage the state-of-art compiler based approach~\cite{GraphZero}. 
For (2) we explain in \cref{subsect:treadoff-gs}.
For (3) our fast profiler in \cref{subsect:profiler} determines it.

To meet various accuracy requirements in applications, \sys provides two modes for the user to choose, \textit{strict mode} (default) and \textit{loose mode}.
The {strict mode} uses only the \NS engine
with online convergence detection to guarantee high confidence. 
In this mode the fast profiling and cost models are bypassed.
The {loose mode}, however, employs our proposed hybrid approach.
It uses the fast profiler and cost models to determine if the \GS or \NS engine is used. 
When comparing the predicted performance of \NS and \GS, \sys uses a thresholding mechanism to check if the predicted performance of \GS overlaps with the performance cone of \NS, i.e., it is either entirely above or entirely below the cone.
If not, \sys chooses the faster one and activates the corresponding engine.
Otherwise, the two schemes should perform similarly well, hence \sys chooses \NS to guarantee high confidence.




\subsection{Tradeoff in the \GS Engine} \label{subsect:treadoff-gs}

For simplicity, we discuss edge sparsification, while color sparsification is similar. 
The estimated count is $\hat{C} = Y\cdot p^{-l}$, where the random variable $Y$ denotes the number of matches in the sampled graph $G'$, which is sparsified from \dg with probability $p$. For each match $M$ in \dg, the probability that $M$ exists in $G'$ is $p^l$, hence the expected number of matches in $G'$ is   $\mathop{\mathbb{E}}[Y] = p^lC$, i.e., $\mathop{\mathbb{E}}[Y\cdot p^{-l}] = C$, so we have $\mathop{\mathbb{E}}[\hat{C}] = C$.  
Although the edges are sampled independently, the matches are not.
Consider an edge $e$ shared between two matches $M_i$ and $M_j$ of \pg. When $e$ gets removed during sparsification, necessarily both $M_i$ and $M_j$ will not be counted.
So the Chernoff bounds do not apply directly for \GS.
We can use Chebyshev’s bounds instead.
$\mathop{\mathbb{V}ar}[\hat{C}]$ $= C \cdot (p^{-l}-1)+\mathop{\mathbb{C}ov}$, where $\mathop{\mathbb{C}ov}=$
$\sum_{z=2}^{k-1}t_z\cdot (p^{1-z}-1)$ and $t_z$ is the number of pairs of matches that share $z$ vertices.
The variance depends on both the number of matches $C$ in \dg and the number of pairs of matches that share one or more edges. 

Apparently, the key knob to tune accuracy is $p$.
As we increase $p$, the variance decreases and accuracy increases.
Since \GS speed is insensitive to the number of colors $c=\frac{1}{p}$ within a wide stabilization window (\cref{fig:color-speed}), we can pick a large $p$ to achieve better accuracy.
Note that the variance increases exponentially in the pattern size $k$.
This means larger patterns are more difficult to estimate accurately,
and so we may need a larger $p$, to guarantee the same error bound.

\subsection{Fast Profiling for Cost Models} \label{subsect:profiler}

To predict performance in our cost models, we need the number of samples $N_s$ for \NS and the number of colors $c$ for \GS.
Our fast profiler is used to quickly determine the values of these parameters.
Note that $N_s$ here is only used in the \NS cost model, not used for \NS sampling (as our \NS-online does not need to predict $N_s$). 
The profiler first generates a sparsified graph $G'$, e.g., 10\% of the original graph.
Then, it runs our \NS-online engine with an internal error bound (50\%) and confidence {(99\%)}. 
By detecting convergence, we can determine $ N_s = \frac{{N_o} \cdot \hat{ \epsilon} \cdot {\mu} \cdot \rho(P,G)}{\overline{S} \cdot \epsilon ^2 \cdot \rho(P,G')} $, where
$N_o$ is the number of samples \NS-online converged with,  
$\mu$ is the count returned by \NS-online, 
$\overline{S}$ is the scaled count from $G'$ to $G$ (see \ref{subsect:subgraph-sampling}), 
{$\hat{\epsilon}$ is the final predicted error by \NS-online},
$\rho(P,G)$ is the probability of sampling pattern $P$ in $G$, which is used in prior systems \cite{ASAP,Arya}, and determined by properties of $G$, e.g., $\Delta$ and $|E|$. 

\subsection{Parallel Implementation Details} \label{subsect:impl}

Our \NS and \GS engines are both parallelized. 
In \GS, because sparsification creates non-overlapping subgraph partitions, each partition can be searched independently. 
Within each partition, we further parallelize it over each vertex in the subgraph, which provides enough parallelism.
In \NS, as each sample is an independent task, it can be embarrassingly parallelized across samples.
Note that our \NS-online requires a barrier synchronization at the end of each interval $W$ (\cref{algo:ns-online:line:period} in \cref{algo:ns-online}). 
If $W$ is too small, we have limited parallelism and too much synchronization overhead. 
But if $W$ is too large, we may end up with drawing more samples than necessary.
Thus, we want $W$ to be large enough to maximize parallelism and minimize synchronization overhead, and small enough to minimize redundant work.
We set $W$ to be a fixed percent of the estimate of $N_s$ returned from the fast profiler (e.g. 10\% of $N_s$). 


\section{Evaluation} \label{sect:eval}

\begin{table}[t]
\centering
\resizebox{0.48\textwidth}{!}{
	\begin{tabular}{crrrrrr}
		\Xhline{2\arrayrulewidth}
		\bf{Graph} & \bf{Source} & {\bf{|V|}} & {\bf{|E|}} & {\bf{Avg deg.}} & {\bf{Max deg.}}\\
		\hline
		\texttt{lj} & {liveJournal}~\cite{SNAP} & 4.8M & 43M & 17.7 & 20,333\\
		\texttt{tw} & {twitter40}~\cite{twitter40} & 42M & 2.4B & 57.7 & 2,997,487\\
		\texttt{fr} & {friendster}~\cite{friendster} & 66M & 3.6B & 55.1 & 5,214\\
		\texttt{uk} & {uk2007}~\cite{uk2007} & 106M & 6.6B & 62.4 & 975,419\\
		\texttt{gsh} & {gsh-2015}~\cite{gsh15} & 988M & 51B & 52.0 & 58,860,305 \\
		\texttt{cw} & {clueweb12}~\cite{gsh15} & 978M & 75B & 76.4 & 75,611,696 \\
		\Xhline{2\arrayrulewidth}
	\end{tabular}
}
\caption{\small Data graphs (symmetric, no loops or duplicate edges). Maximum degrees are smaller when orientation is applied for cliques.}
\label{tab:input}
\end{table}


We implement \sys in C++ and OpenMP for parallelization. 
We use \sys-\NS, \sys-\GS, \sys-{\tt HY} to represent the \NS, \GS, and hybrid mode of our system respectively.
In this evaluation we focus on comparing with prior A-GPM systems that provide both generalization and automation (see \cref{sect:related} for discussion on non-systematic solutions). 
We compare \sys with the state-of-the-art A-GPM system, Arya~\cite{Arya}, and exact GPM system, GraphZero~\cite{GraphZero}. 
We do not include ASAP since Arya always outperforms ASAP.
We test on a 3.0 GHz, 48-core (2-sockets, 24 cores per socket) Intel CPU without hyperthreading, with up to 1TB of memory. 
\cref{tab:input} shows the graphs used in our experiments, which are representative real-world graphs with varying sizes and topology characteristics.
In \sys, graphs are represented in the Compressed Sparse Row (CSR) format.
We evaluate two types of GPM tasks: subgraph counting (edge-induced) and motif counting (vertex-induced). 
For subgraph counting, we test on patterns including $k$-\texttt{\small cliques}, 5-\texttt{\small path}, \texttt{\small house}, and \texttt{\small dumbbell}.
We do not include even larger non-clique patterns, because we can not verify the errors as their exact counts are unknown for most of the graphs.
In all the experiments, we time out at 10 hours.
{We then conservatively use 10 hours for the timed-out cases when calculating speedups.}

We first compare the overall performance of \sys with state-of-the-art systems in \cref{subsect:overall-perf}. We show how our convergence detection performs in \cref{subsect:converge-perf}, and the accuracy the \NS and \GS cost models in \cref{subsect:model-accuracy}. We discuss system efficiency
in \cref{subsect:sys-efficiency}.

\subsection{Sampling Performance vs. State-of-the-Art} \label{subsect:overall-perf}

We compare sampling performance with Arya and GraphZero.
For the hybrid mode we discuss its profiling time in \cref{subsect:sys-efficiency}. 
For \sys-\GS, we do include the preprocessing time spent on sparsifying the data graph.
We use an error bound of 10\% and confidence of 99\%, which is the common practice in ASAP and Arya. 

\begin{table*}[tp]
\resizebox{0.75\textwidth}{!}{
\begin{tabular}{c|r|rrr|rrr|rrr}
\multicolumn{2}{c|}{\textbf{Pattern}} & 
\multicolumn{3}{c|}{3-\textbf{clique} (triangle)} & \multicolumn{3}{c|}{\textbf{4-clique}} & \multicolumn{3}{c}{\textbf{6-clique}}    \\ 
\hline
\multicolumn{2}{c|}{\textbf{Graph}} & \texttt{Lj} & \texttt{Tw} & \texttt{Fr} & \texttt{Lj} & \texttt{Tw} & \texttt{Fr} & \texttt{Lj} & \texttt{Tw} & \texttt{Fr} \\
\hline

& time (sec) & \textbf{0.001}  & \textbf{0.046} & 0.026  & \textbf{0.003}  & \textbf{0.068}  & \textbf{0.090}  & \textbf{0.059}  & \textbf{0.707} & \textbf{1.132} \\
{\sys}-\NS & hit rate & 18\% & 55\% & 53\% & 7.3\% & 46\% & 31\% & 6.4\% & 46\% & 16\% \\
 & \# samples & 1e4 & 2e4 & 1e4 & 8e4 & 2e5 & 5e4 & 2.1e6 & 4.1e6 & 7.7e7 \\
\hline

& time (sec) & 0.014 & 1.193 & \textbf{0.017} & 94.2 & 9656.9 & 2057.7 & TO & TO & TO \\
Arya & hit rate & 10.6\%  & 3.9\%  & 2.9\% & 3.2e-3\% & 2.7e-3\% & 1.7e-6\% & --  & -- & -- \\
& \# samples & 4.3e4  & 6.3e4 & 4.6e4  & 3.3e8  & 2.6e8  & 5.1e9 & $\times$  & $\times$ & $\times$ \\
\hline

GraphZero & time (sec) & 0.434 & 58.0 & 83.0 & 2.1 & 4004.5 & 73.1 & 2502.5 & TO & 160.4 \\
\Xhline{2\arrayrulewidth}
\end{tabular}
}
\caption{\small $k$-clique performance (10\% error). 
TO: timed out. 
Arya uses ELP. 
$\times$: ELP does not converge.
Fastest time in bold.
}
\label{tab:clique}
\end{table*}

\cref{tab:clique} compares the $k$-\texttt{clique} ($k$=3,4,6) running time of \sys with Arya and GraphZero.
Overall cliques, \sys achieves a \texttt{geomean} average speedup of \usoveraryaclique (up to \usoveraryacliquemax) against Arya, and \usoverexactclique over (up to \usoverexactcliquemax) GraphZero respectively.
In general, the speedup of \sys-\NS over Arya comes from two parts.
First, our online convergence detection gives stable and precise termination condition (demonstrated later in \cref{fig:convergence} and \cref{fig:stable-estimated-error}), while ELP in Arya could suggest very conservative termination conditions that do way more samples than necessary (see 
\cref{fig:sample-reduction}).
Second, our \NS-prune approach dramatically improves the \hrate over \NS-base and thus the total number of samples is reduced.
This is evidenced in \cref{fig:successrate} and further confirmed in \cref{fig:ns-base-ns-prune}.
The significant speedups over Arya are expected because (1) Arya is based on pattern decomposition and cliques are hard to decompose, (2) cliques are more likely to fall in the needle-in-the-hay cases which Arya handles poorly.
The speedups are further evidenced by the hit rate and number of samples $N_s$ used.  
{In particular, for 4-\texttt{clique}, Arya requires 3 to 5 orders of magnitude more samples,
while its hit rates are extremely low, e.g., {1.7$\times 10^{-6}$\%} for \texttt{Fr}.
}
Notably, Arya is {28$\times$} slower than GraphZero for the sparse case 4-\texttt{clique} on \texttt{Fr}. 
Moreover, Arya's ELP can not converge within 10 hours for the even more sparse case, 6-\texttt{clique} on \texttt{Fr},
emphasizing the limitation of ELP.
{For \texttt{triangle} on \texttt{Fr}, 
\sys-\NS is slightly slower than Arya due to synchronization overhead.
}

\begin{table*}[t]
\resizebox{.78\textwidth}{!}{
\begin{tabular}{c|r|rrr|rrr|rrr}
\multicolumn{2}{c|}{\textbf{Pattern}} & \multicolumn{3}{c|}{\textbf{5-path}} & \multicolumn{3}{c|}{\textbf{5-house}} & \multicolumn{3}{c}{\textbf{6-dumbbell}}    \\ 
\hline
\multicolumn{2}{c|}{\textbf{Graph}} & \texttt{Lj} & \texttt{Tw} & \texttt{Fr} & \texttt{Lj} & \texttt{Tw} & \texttt{Fr} & \texttt{Lj}& \texttt{Tw}  & \texttt{Fr}\\
\hline

& time (sec)  & \textbf{0.004}  & \textbf{0.91} & \textbf{0.10}  & \textbf{0.008} & \textbf{944.6*} & \textbf{0.159} & \textbf{0.017}  & \textbf{159.3}&\textbf{0.331}  \\

\sys-\NS & hit rate & 29\%  & 99\%  & 90\%   & 43\% & 43\% & 15\% & 16\% & 43\% & 32\% \\
& \# samples & 3e4 & 1e4 & 2e4 & 5e4 & 1.9e7 & 3.3e5 & 1.5e5 & 1.6e7 & 6.9e5 \\
 \hline

& time (sec)  & 4.422 & 13.78 & 77.93 & 395.6 &  1297.4& TO & 346.2   &4644.4 & TO \\
Arya & hit rate & 0.04\%  & 5.8e-03\%  & 1.7\%   & 2.6e-3\%   & 0.02\%   & --   & 0.04\%  & 2.2e-3\% & -- \\
 & \# samples & 1.1e7 & 1.4e7 & 3.4e5  & 1.5e9 & 6.8e7 & $\times$ & 2.1e8 & 1.1e9 & $\times$ \\
\hline

GraphZero & time (sec) & 262.1 & {TO} & {TO} & TO & {TO} & {TO} & {TO} & {TO} & {TO}  \\
\Xhline{2\arrayrulewidth}
\end{tabular}
}
\caption{\small Non-clique pattern 
performance (10\% error).
TO: timed out. 
$\times$: ELP does not converge. 
* \GS is selected by our hybrid method.
}
\label{tab:complex}
\end{table*}

Table~\ref{tab:complex} compares performance on large, non-clique patterns, including 5-\texttt{path}, 5-\texttt{house} and 6-\texttt{dumbbell}. 
Note that it is well known that for exact GPM solvers (e.g. GraphZero) it is much more expensive to search for these patterns than cliques, as they are sparser and their sizes are beyond 4.
We observe that GraphZero is timed out for most of the cases, which shows that it is critical to use approximation for large (sparse) patterns.
Arya enjoys fairly good speedups over GraphZero.
This is because, compared to cliques, these patterns are easier to decompose, which is the case that favors Arya.
However, for \texttt{Fr}, Arya still suffers extremely high $N_s$ and low hit rate, and hence runs more than 10 hours for 5-\texttt{house} and 6-\texttt{dumbbell}.
In all these cases, \sys significantly improves hit rates and reduces $N_s$, which lead to fast convergence speed. 
Across these patterns, \sys achieves an {\texttt{geomean}} average of \usoveraryalargepattern and \usoverexactlargepattern speedup against Arya and GraphZero respectively.
Note that for \texttt{Tw} on \texttt{house}, as shown in \cref{tab:gs}, \sys-\texttt{HY} selects the \GS engine in this case, since \GS is predicted to be faster, based on our prediction on the number of samples and time per sample.

\begin{table*}
\captionsetup[subtable]{aboveskip=3pt,belowskip=3pt}
\centering
 \begin{subtable}{.475\linewidth}
 \centering
 \resizebox{\textwidth}{!}{
\begin{tabular}{r|rrr|rrr}
\multicolumn{1}{c|}{\textbf{Pattern}} & \multicolumn{3}{c|}{\textbf{3-motif}} & \multicolumn{3}{c}{\textbf{4-motif}}   \\ 
\hline
\multicolumn{1}{c|}{\textbf{Graph}} & \texttt{Lj} & \texttt{Tw} & \texttt{Fr} & \texttt{Lj} & \texttt{Tw} & \texttt{Fr}\\
\hline

\sys-\NS  & \textbf{0.004} & \textbf{0.5} & 0.14 & \textbf{ 0.06}  & \textbf{82.7}  & \textbf{0.2} \\

Arya &0.020 & 1.9 & \textbf{0.08} & 231.63 & 13180.2 & 6157.9 \\
GraphZero & 1.283 & 16316.2 & 242.62 & 1927.27 & {TO} &  TO \\
\Xhline{2\arrayrulewidth}
\end{tabular}
}
 \caption{\small Performance on Motif Counting. }
 \label{tab:motif}
 \end{subtable}
\quad
 \begin{subtable}{.495\linewidth}
 \centering
 \resizebox{\textwidth}{!}{
\begin{tabular}{r|rrr|rrr|rrr}
\multicolumn{1}{c|}{\textbf{Pattern}} & \multicolumn{3}{c|}{\textbf{triangle}} & \multicolumn{3}{c|}{\textbf{4-clique}} & \multicolumn{3}{c}{\textbf{5-path}*}    \\ \hline
\multicolumn{1}{c|}{\textbf{Graph}} & \texttt{uk} & \texttt{gsh} & \texttt{cw} & \texttt{uk} & \texttt{gsh} & \texttt{cw} & \texttt{uk} & \texttt{gsh} & \texttt{cw} \\ \hline
\sys-\NS & \textbf{0.09} & \textbf{0.28}  & \textbf{0.19} & \textbf{0.11}& \textbf{0.88}  & \textbf{1.09} & \textbf{0.1}& \textbf{11.2} &\textbf{156.0} \\
Arya & 0.2 & 0.7& OoM& 175.5 & $\times$ & OoM & 4.2 & $\times$ & OoM  \\
GraphZero & 73.3 & 153.6 & 198.2 & {TO} & {TO} & {TO} &{TO}& {TO} & {TO} \\
\Xhline{2\arrayrulewidth}
\end{tabular}
}
\caption{\small Performance on huge graphs. 
}
\label{tab:large}
 \end{subtable}
\vspace{-5pt}
\caption{Running time (sec) (10\% error). 
TO: timed out. 
$\times$: ELP does not converge. 
* the true count is unknown, so accuracy is not verified. 
}
\label{tab:combine-motif-large}
\end{table*}

Table~\ref{tab:motif} reports the motif counting performance. 
Note that in motif counting, we look for vertex-induced subgraphs, unlike cases in \cref{tab:complex} which search for edge-induced subgraphs.
The key difference in computation is that we need both set intersection and set difference to find vertex-induced subgraphs, but only set intersection is needed for edge-induced subgraphs.
Despite more computation needed, \sys is still significantly faster than Arya.
Across motifs, \sys achieves a \texttt{geomean} speedup of \usoveraryamotif speedup against Arya,
and \usoverexactmotif speedup over GraphZero.

\cref{tab:large} compares performance on huge graphs, i.e., {uk2007 (\texttt{uk})} gsh-2015 (\texttt{gsh}) and clueweb12 (\texttt{cw}). 
Note that Arya mostly runs out of memory for \texttt{cw} because (1) it has to maintain the set union results (for each of the parallel threads) in memory, and (2) its internal representation of the graph stream is implemented in COO-like format,
which is less compact than CSR. 
\sys achieves a \texttt{geomean} average \usoveraryalargegraph compared to Arya and a \usoverexactlargegraph speedup compared to GraphZero. 

Table ~\ref{tab:gs} shows the cases where \sys-{\tt HY} selects the \GS mode. Notably, for \texttt{Fr} and $k=9$ (sparse graph and big dense pattern) which is a needle-in-the-hay case for \NS, \sys successfully choose to use \GS instead of \NS.
This switch from \NS to \GS brings us a \gsovernsmax performance improvement. Over all the cases where the \GS engine is selected, the \GS engine leads to a \texttt{geomean} average \gsoverns speedup over the \NS engine.

\begin{table*}
\captionsetup[subtable]{aboveskip=3pt,belowskip=3pt}
\centering
 
 \begin{subtable}{.395\linewidth}
 \centering
 \resizebox{\textwidth}{!}{
\begin{tabular}{r|rr|rr|r}
\multicolumn{1}{c|}{\textbf{Pattern}} & \multicolumn{2}{c|}{\textbf{8-\texttt{clique}}} & \multicolumn{2}{c|}{\textbf{9-\texttt{clique}}} & \multicolumn{1}{c}{\textbf{5-\texttt{house}}}    \\ \hline
\multicolumn{1}{c|}{\textbf{Graph}} & \texttt{lj} & \texttt{fr} & \texttt{lj} & \texttt{fr} & \texttt{tw}  \\ \hline
\sys - \NS  & 3.0  & 1663.6  & 17.4  & 8358.0 &944.6  \\
\sys - \GS & \textbf{0.7}  &  \textbf{43.4}  &   \textbf{2.1} & \textbf{134.9} & \textbf{200.1} \\
\Xhline{2\arrayrulewidth}
\end{tabular}
}
 \caption{\small running time (sec) of \NS and \GS} 
 \label{tab:gs}
 \end{subtable}
\quad
 \begin{subtable}{.53\linewidth}
 \centering
 \resizebox{\textwidth}{!}{
\begin{tabular}{r|rrrr|rrrr}
\multicolumn{1}{c|}{\textbf{Pattern-Graph}} & \multicolumn{4}{c|}{\textbf{\texttt{house} -- \texttt{Lj}}} & \multicolumn{4}{c}{\textbf{4-\texttt{clique} -- \texttt{Tw}}}   \\ 
\hline
\multicolumn{1}{c|}{\textbf{Error bound}} & \textbf{10\%} & \textbf{5\%}  & \textbf{2\%} & \textbf{1\%} & \textbf{10\%} & \textbf{5\%}  & \textbf{2\%} & \textbf{1\%}\\
\hline

\sys-\NS   & \textbf{0.008} & \textbf{0.053} & \textbf{0.3}  & \textbf{1.1}  & \textbf{0.07} & \textbf{0.13} & \textbf{0.69} & \textbf{2.68} \\

Arya &395.6 & 1639.6 & 9873.8 & TO & 9656.8 & TO & TO & TO \\

\Xhline{2\arrayrulewidth}
\end{tabular}
}
 \caption{\small running time (sec) under error bounds from 10\% to 1\%}
 \label{tab:error-varied}
 \end{subtable}
\vspace{-5pt}
\caption{\small (a) Comparing \NS and \GS performance when \sys-{\tt HY} selects \GS; (b) performance changes when varying error bounds.}
\end{table*}

Table ~\ref{tab:error-varied} shows Arya and \sys running time with varied error bounds. 
We observe that the performance gap between Arya and \sys remains huge (ranging from 31 to 49 thousand times) as we decrease the error bound.

Overall, we achieve \usoveraryaall speedup over Arya, and four orders of magnitude speedup over GraphZero.

\vspace{-3pt} 
\subsection{Effectiveness of Convergence Detection} \label{subsect:converge-perf}

\begin{figure}[tb]
\includegraphics[width=0.49\textwidth]{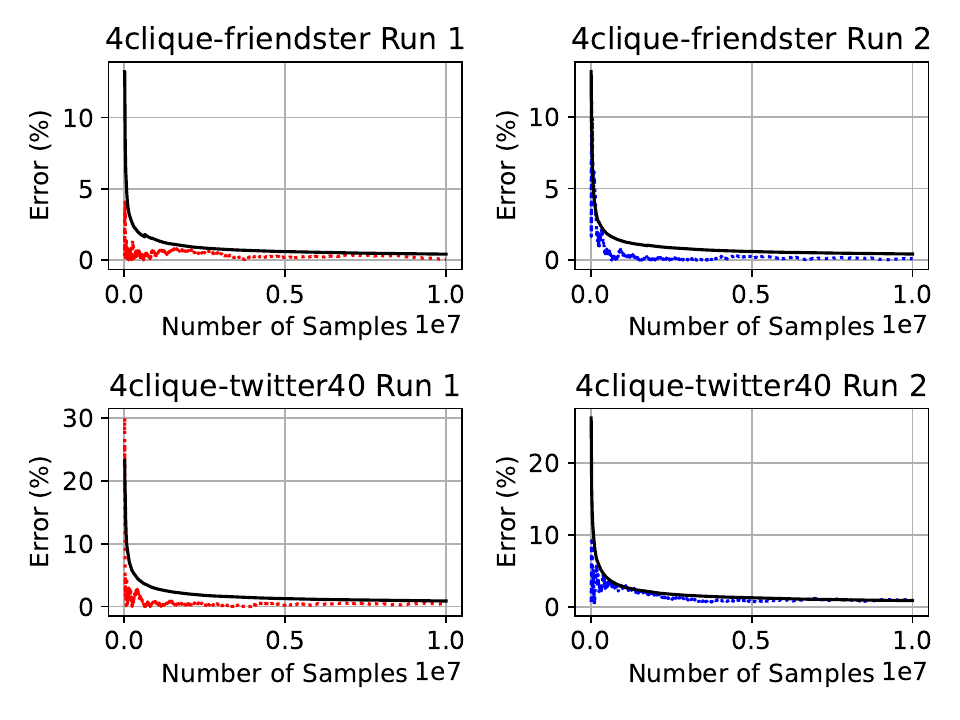}
\caption{Comparing \sys-\NS's estimated errors (black curves) with actual errors (red and blue curves). 
We 
do two runs of 4-\texttt{clique} counting on two graphs \texttt{Fr} and \texttt{Tw}. }
\label{fig:convergence}
\end{figure}

\cref{fig:convergence} compares the error predicted by \sys-\NS with the actual error throughout the sampling procedure, to show the effectiveness of \sys-\NS 's online convergence detection.
We illustrate two cases here, but we have verified the same trend for all our test cases in the evaluation.
In \cref{fig:convergence} we observe that for both cases, the predicted error curves strictly bound the actual error, which verifies the high confidence that our method achieves.
In addition to our theoretical proof in \cref{subsect:converge}, this empirical study further demonstrates that our method provides strong guarantee on confidence,
which is critical in applications where users want to have strict accuracy requirement.

\begin{figure}[tb]
\centering
\includegraphics[width=0.38\textwidth]{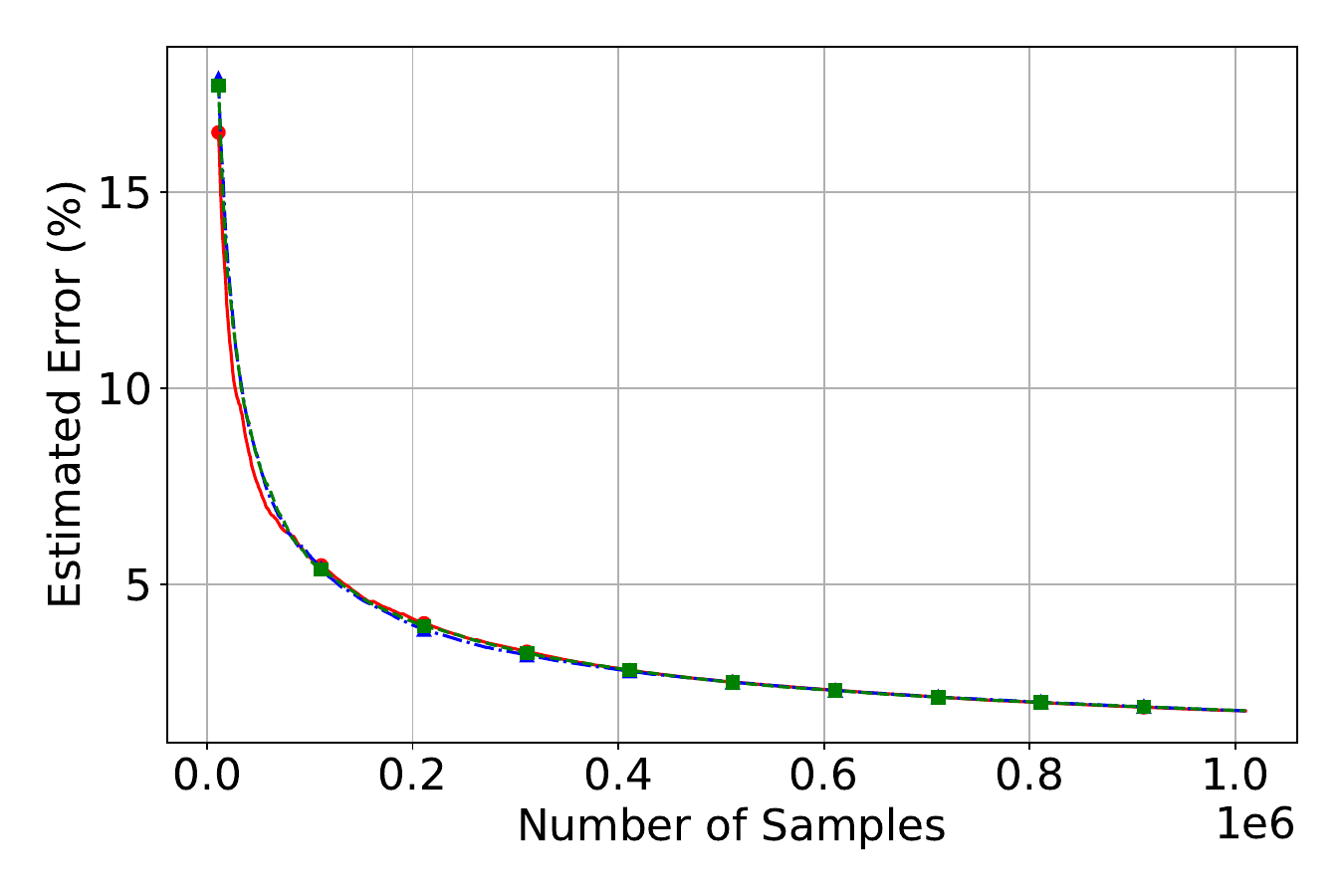}
\vspace{-5pt}
\caption{Predicted errors in our convergence detection mechanism across three different runs (\texttt{4-clique} on \texttt{Lj}) are extremely stable.}
\label{fig:stable-estimated-error}
\end{figure}

\cref{fig:stable-estimated-error} shows the stability of the error estimation method in \sys-\NS. 
We do three repeated runs on the \texttt{4-clique} pattern and \texttt{Lj} graph.
In contrast to the unstable predictions (\cref{fig:elp-unstable}) given by ELP in prior systems, our estimated errors are almost the same across three independent runs. 
{Moreover, our online mechanism never fails, while ELP suffers convergence issue that may lead to endless preprocessing (e.g., for 6-\texttt{clique} in \cref{tab:clique}). } We observe the same trend in stability on other graphs and patterns.
This experiment further demonstrates that our method is highly reliable and can be adopted in practice.

\begin{figure}[tb]
\centering
\includegraphics[width=0.4\textwidth]{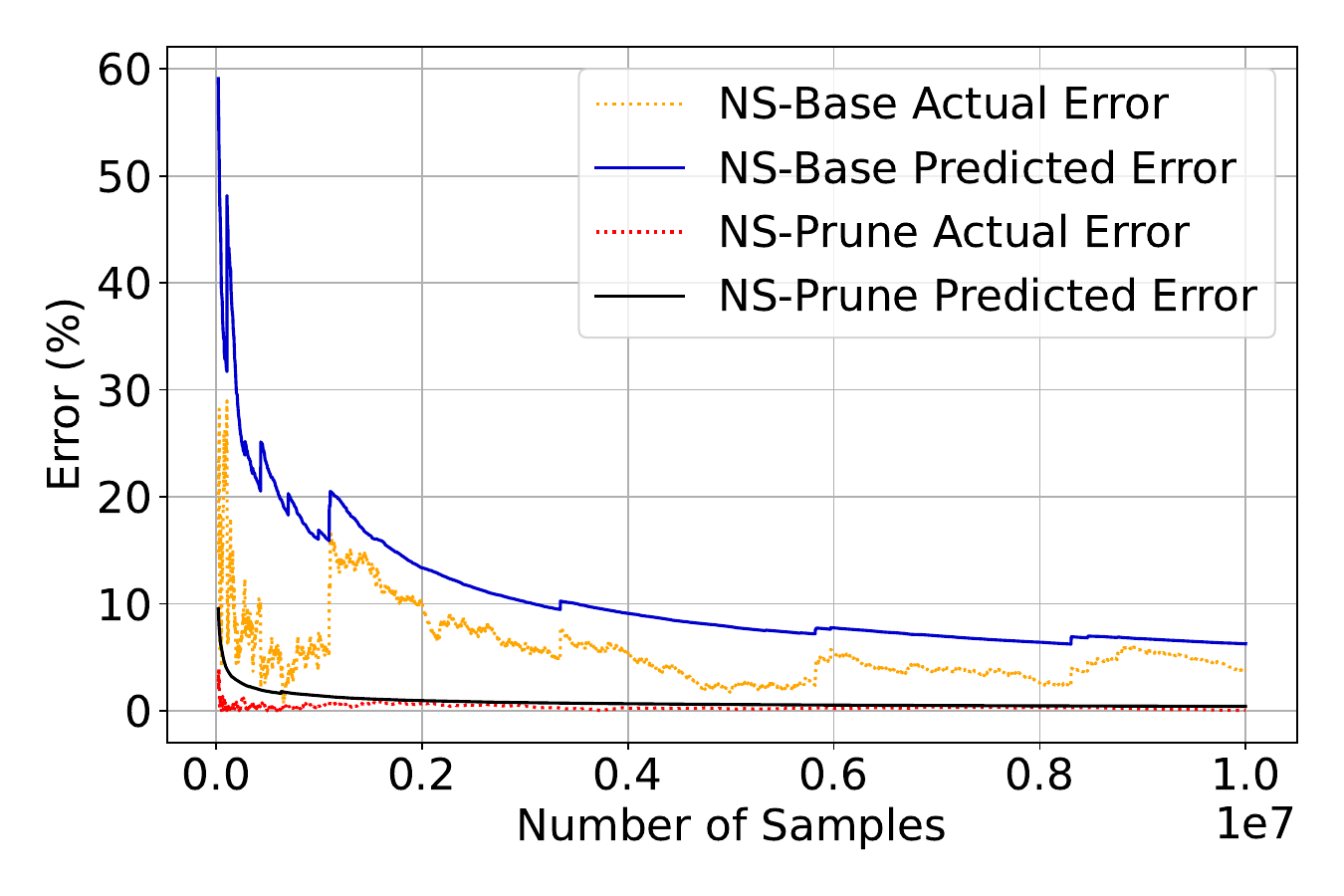}
\vspace{-5pt}
\caption{Comparison between the convergence rate of \NS-prune and \NS-base. The same formula (with two standard deviations of confidence) was used to generate the error estimate curves.}
\label{fig:ns-base-ns-prune}
\end{figure}

\cref{fig:ns-base-ns-prune} compares the convergence rate of \NS-prune and \NS-base, both under our online detection framework. 
We observe that our proposed early-pruning mechanism employed in \NS-prune result in roughly an order of magnitude lower error for the same number of samples.
Therefore, with pruning, our system can converge much faster that ASAP and Arya, 
which is the other major reason why \sys achieves much better performance.

\cref{fig:sample-reduction} shows the reduction on the number of samples $N_s$, by incrementally applying online convergence detection (orange) and \eager (green), against ELP in Arya (blue).
Note that in \cref{tab:clique} and \cref{tab:complex} we report $N_s$ only for \NS-online-prune, but here we breakdown the contributions of online detection and pruning.
We observe that our online method reduces $N_s$ sharply over Arya.
Applying pruning in \eager further reduces $N_s$ by a significant amount.
Together, we can meet the same error bound with much few samples, and more importantly, with confidence.

\begin{figure}[tb]
\centering
\includegraphics[width=0.39\textwidth]{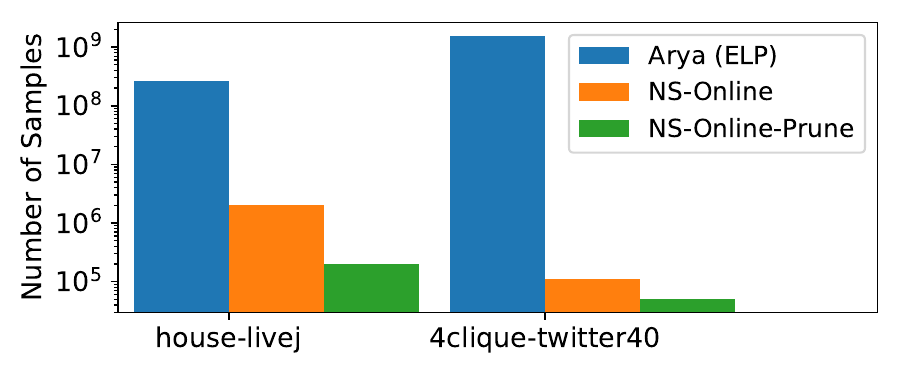}
\caption{The reduction on the number of samples.}
\label{fig:sample-reduction}
\end{figure}

\subsection{Prediction Accuracy of Cost Models} \label{subsect:model-accuracy}

\cref{fig:perf-predict-gs} shows the effectiveness of our cost model in predicting the running time of the \GS engine in \sys. 
As we increase the number of colors $c=\frac{1}{p}$ used in \GS, the total work is exponentially decreased. 
Thus, the \GS execution time rapidly becomes dominated by the preprocessing time spent on sparsifying the graph. 
For {\texttt{4clique} and \texttt{house} on \texttt{Lj} and \texttt{Fr}}, we see that the cost model precisely captures the exponential trend, as well as the stabilization window of the \GS engine that we discussed in \cref{fig:color-speed}. 
Note that when $c$ is extremely small, e.g., $c<5$, it is hard to accurately model the performance due to the steep slope in that curve.
However, in this case (which is a needle-in-the-hay case) we would favor the use of \NS or even exact counting, as \GS with a small $c$ won't be much faster than exact counting.

\begin{figure*}[tb]
\centering
\includegraphics[width=0.95\textwidth]{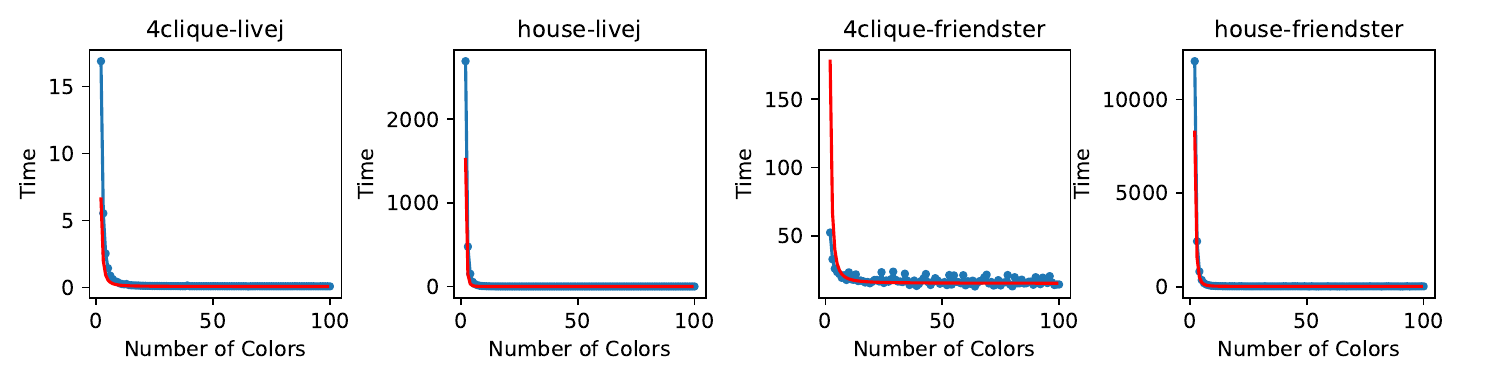}
\caption{The quality of performance prediction for \sys 's \GS Engine. The red lines are our predictions, while blue dots are actual time. }
\label{fig:perf-predict-gs}
\end{figure*} 

\begin{figure*}[tb]
\centering
\includegraphics[width=0.95\textwidth]{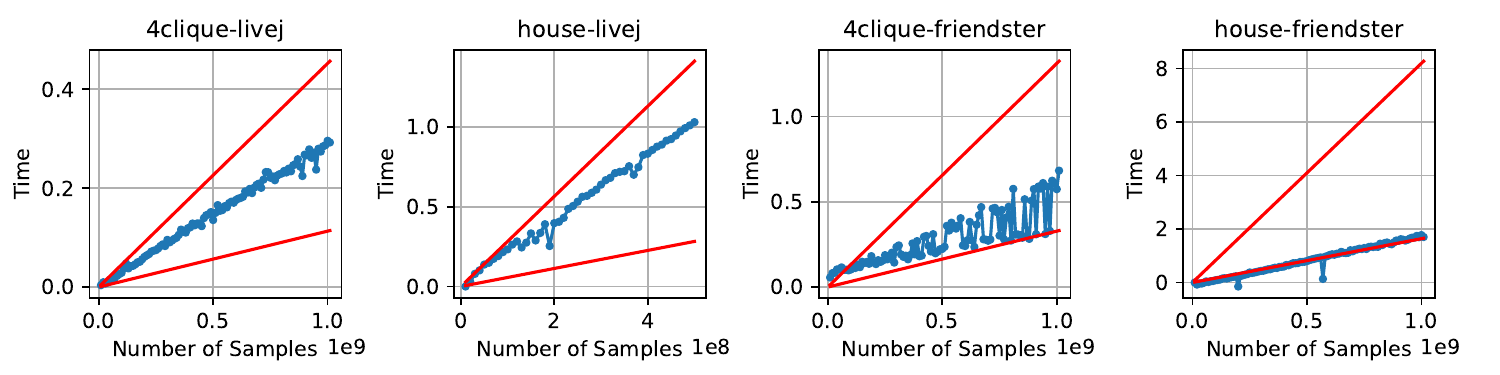}
\caption{The quality of performance prediction for \sys 's \NS Engine. The red lines are our predictions, while blue dots are actual time. 
}
\label{fig:perf-predict-ns}
\end{figure*}

\cref{fig:perf-predict-ns} shows the effectiveness of our \NS cost model. 
We see that our proposed \textit{performance cone} correctly captures the running time of the \NS engine at varying numbers of samples.
Note that for \texttt{Fr}, a relatively sparse graph, the execution time approaches the bottom of the performance cone as expected. 
We find in practice, that \GS and \NS are often predicted to have very distinct performance, so the width of the cone is not an obstacle in making the correct choice.
{The fluctuation in the sparse case 4\texttt{clique}-\texttt{Fr} is also expected, as sample hits are less frequent and thus more randomness is involved. }

\vspace{-6pt} 
\subsection{System Efficiency} \label{subsect:sys-efficiency}

\begin{figure}[tb]
\centering
\includegraphics[width=0.5\textwidth]{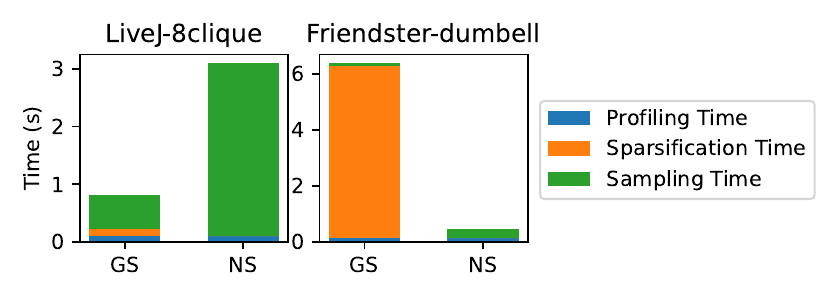}
\vspace{-13pt}
\caption{End-to-end time breakdown of \sys. }
\label{fig:breakdown}
\end{figure}

{\bf Timing Breakdown}. 
\cref{fig:breakdown} shows the breakdown of the execution time spent on different components of \sys.
We consider the profiling time, the \GS preprocessing time, and the sampling time (either exact counting in \GS or drawing samples in \NS).
We see that profiling remains a low percentage of the overall runtime.
As for the sampling time, \texttt{Lj}-8\texttt{clique} requires less time using the \GS engine, while \texttt{Fr}-\texttt{dumbbell} is processed faster using the \NS engine, which aligns with our cost model prediction and thus \sys-{\tt HY} makes the correct thresholding decision. 

\begin{figure}[tb]
\centering
\includegraphics[width=0.45\textwidth]{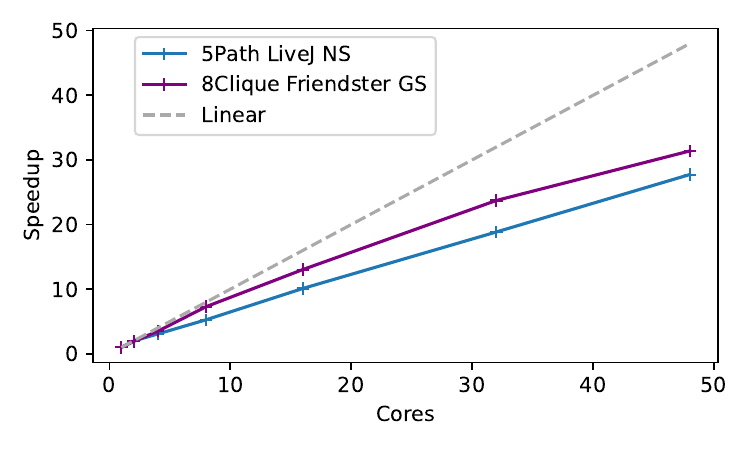}
\vspace{-7pt}
\caption{\sys speedup scaling over single-thread.}
\label{fig:scale}
\end{figure}

{\bf Scalability}. 
\cref{fig:scale} shows how the performance of \GS engine and \NS engine in \sys (error bound of 10\%) scales in response to the increase of parallel cores (i.e., the number of threads). 
We evaluate the \NS engine on \texttt{5path}-\texttt{Lj} which is a case that favors the use of the \NS engine (i.e., \NS is faster than \GS). 
\GS is evaluated on 8\texttt{clique}-\texttt{Fr} as 8-\texttt{clique} is a dense pattern and is rare in \texttt{Fr}, preferably executed in \GS. 
In both cases, we observe strong scaling that the execution time of both engines increases linearly as we range the number of cores from 1 to 48. 



\section{Related Work} \label{sect:related}



There exist many sampling methods other than neighbor sampling and graph sparsification. 
We discuss some of the typical methods in the following.
We do not compare with them in \cref{sect:eval} because they do not provide either generalization or automated termination (with confidence). 
However, they can potentially be used to replace the \GS engine in \sys. 
We leave it as a future work. 



\noindent
{\bf Color Coding}. 
It first colors each vertex in \dg using a color
randomly chosen from \{1,2,$\dots$,$c$\} ($c\geq k$), 
and then counts \textit{colorful} matches, i.e., every vertex in the matched subgraph has a unique color.
The requirement of distinct colors allows for heavy pruning: 
the number of colorful matches $Z$ can be naturally determined by a dynamic programming based counting routine~\cite{fascia}.
Color-coding is originally for finding paths or cycles~\cite{Alon,Color-Coding}, and is then adapted for motif counting~\cite{ParSE,CC-biomolecular,AE-CC}.
There also exist many parallel and distributed implementations \cite{fascia,CC-Parallel,SC-CC,SAHAD}.
In addition, the colorful matches can be further sampled~\cite{Motivo,Bressan1,Bressan}, 
instead of exactly counted, to reduce computation.
Like \GS, color-coding is also a coarse-grain scheme, which can be included in \sys.

{\bf Loop Perforation}. 
SampleMine~\cite{SampleMine} proposed to perforate the nested \texttt{for} loops in GPM programs, with a certain probability $p_i$ for the $i$-th loop. 
The count is then scaled based on the $p_i$.
SampleMine can be thought of as a generalization of the vanilla \NS scheme, as it collectively samples multiple candidates, instead of a single one, at a time.
Although it has larger sampling granularity than the vanilla \NS, 
its granularity is still limited by the egonet of a vertex or edge, similar to Egonet Sampling~\cite{EgonetMotif}. 
More importantly, SampleMine does not provide a systematic way for sampling termination, which is instead hand-tuned by executing the sampling procedure multiple times and manually observing convergence (i.e. small variance). 

\noindent
\textbf{Other Schemes and Approaches}.
Monte Carlo Markov Chain (MCMC) ~\cite{GUISE,TETRIS,GeneralRW,PSRW} defines a random walk over the set of subgraphs until it reaches stationarity. 
MCMC has been used for motif counting. 
However, it has been shown that MCMC can be extremely inefficient as the random walk may take a huge amount of steps to reach stationarity~\cite{Bressan,Bressan1,Motivo}.


\section{Conclusion}

Approximate \gpm (A-GPM) systems can be the backbone to support numerous real-world data analytics applications.
The key obstacles that prevent A-GPM systems from being adopted in practice is a) the lack of stable, confident termination mechanism and b) poor performance and scalability when dealing with the ``hard'' cases.
We present \sys, an accurate, high performance and scalable A-GPM systems. 
\sys involves two key innovations that remove the obstacles.
First, we propose a novel online convergence detection mechanism, which can provide theoretical guarantee on prediction confidence and also yield stable termination condition.
Second, we introduce pruning techniques into sampling to improve its success rate, 
and propose a hybrid method to adaptively select the best-performing sampling scheme from two complementary schemes, based on our proposed cost models.
The resulting system, \sys, {achieves an average of \usoveraryaall speedup} over the state-of-the-art A-GPM systems,
and manages to rapidly mine billion-scale graphs.

\bibliographystyle{ACM-Reference-Format}
\bibliography{refs}

\newpage

\newpage

\end{document}